\newcommand{\NN}{\mathbb{N}}
\newcommand{\NNpos}{\NN_{\scriptscriptstyle \geq 1}}
\newcommand{\Structure}[1]{\ensuremath{\mathcal{#1}}}
\newcommand{\CC}{\Structure{C}}
\newcommand{\X}{\Structure{X}}
\newcommand{\Algorithm}[1]{\mathscr{#1}}
\newcommand{\bigO}{\mathcal{O}}
\newcommand{\onot}{\bigO}
\newcommand{\smid}{\,;}
\newcommand{\neighb}[3]{\ensuremath{N_{#1}^{#2}(#3)}} \newcommand{\neighbr}[2]{\ensuremath{\neighb{r}{#1}{#2}}} \newcommand{\Neighb}[3]{\ensuremath{\mathcal{N}_{#1}^{#2}(#3)}} \newcommand{\Neighbr}[2]{\ensuremath{\mathcal{N}_{r}^{#1}(#2)}} 
\newcommand{\type}[3]{\ensuremath{\textup{tp}_{#1}(#2,#3)}} \newcommand{\ltype}[4]{\ensuremath{\textup{ltp}_{#1,#2}(#3,#4)}} \newcommand{\Types}[3]{\ensuremath{\textup{Tp}[#1,#2,#3]}}
\newcommand{\abs}[1]{\left\lvert#1\right\rvert}
\DeclareMathOperator{\dist}{dist}
\DeclareMathOperator{\err}{err}
\renewcommand{\leq}{\leqslant}
\renewcommand{\geq}{\geqslant}
\renewcommand{\le}{\leq}
\renewcommand{\ge}{\geq}
\renewcommand{\phi}{\varphi}
\renewcommand{\epsilon}{\varepsilon}
\newtheorem{fact}[theorem]{Fact}
\newsavebox{\mybox}
\newenvironment{problem}[1]{\vspace*{1ex}\begin{lrbox}{\mybox}\begin{minipage}{0.9\linewidth}#1\setlength{\leftmargini}{4pt}\setlength{\leftmarginii}{20pt}\begin{description}\setlength{\labelwidth}{0pt}}{\end{description}\end{minipage}\end{lrbox}\fbox{\usebox{\mybox}}\vspace*{1ex}}
\newcommand{\algorithmicreject}{\textbf{reject}}
\newcommand{\REJECT}{\ALC@it\algorithmicreject{} \ }
\newcommand{\algorithmicbreak}{\textbf{break}}
\newcommand{\BREAK}{\ALC@it\algorithmicbreak{} \ }
\newcommand{\set}[1]{\ensuremath{\{#1\}}}
\newcommand{\ComplexityClassFont}[1]{\ensuremath{\textup{\textsf{#1}}}}
\newcommand{\FPT}{\ensuremath{\ComplexityClassFont{FPT}}}
\newcommand{\AWstar}{\ensuremath{\ComplexityClassFont{AW}[*]}}
\newcommand{\LogicFont}[1]{\ComplexityClassFont{#1}}
\newcommand{\FO}{\LogicFont{FO}}
\newcommand{\FOmc}{\textup{\sc FO-Mc}}
\definecolor{rwth-blue}{cmyk}{1,.5,0,0}\colorlet{rwth-lblue}{rwth-blue!50}\colorlet{rwth-llblue}{rwth-blue!25}
\definecolor{rwth-violet}{cmyk}{.6,.6,0,0}\colorlet{rwth-lviolet}{rwth-violet!50}\colorlet{rwth-llviolet}{rwth-violet!25}
\definecolor{rwth-purple}{cmyk}{.7,1,.35,.15}\colorlet{rwth-lpurple}{rwth-purple!50}\colorlet{rwth-llpurple}{rwth-purple!25}
\definecolor{rwth-carmine}{cmyk}{.25,1,.7,.2}\colorlet{rwth-lcarmine}{rwth-carmine!50}\colorlet{rwth-llcarmine}{rwth-carmine!25}
\definecolor{rwth-red}{cmyk}{.15,1,1,0}\colorlet{rwth-lred}{rwth-red!50}\colorlet{rwth-llred}{rwth-red!25}
\definecolor{rwth-magenta}{cmyk}{0,1,.25,0}\colorlet{rwth-lmagenta}{rwth-magenta!50}\colorlet{rwth-llmagenta}{rwth-magenta!25}
\definecolor{rwth-orange}{cmyk}{0,.4,1,0}\colorlet{rwth-lorange}{rwth-orange!50}\colorlet{rwth-llorange}{rwth-orange!25}
\definecolor{rwth-yellow}{cmyk}{0,0,1,0}\colorlet{rwth-lyellow}{rwth-yellow!50}\colorlet{rwth-llyellow}{rwth-yellow!25}
\definecolor{rwth-grass}{cmyk}{.35,0,1,0}\colorlet{rwth-lgrass}{rwth-grass!50}\colorlet{rwth-llgrass}{rwth-grass!25}
\definecolor{rwth-green}{cmyk}{.7,0,1,0}\colorlet{rwth-lgreen}{rwth-green!50}\colorlet{rwth-llgreen}{rwth-green!25}
\definecolor{rwth-cyan}{cmyk}{1,0,.4,0}\colorlet{rwth-lcyan}{rwth-cyan!50}\colorlet{rwth-llcyan}{rwth-cyan!25}
\definecolor{rwth-teal}{cmyk}{1,.3,.5,.3}\colorlet{rwth-lteal}{rwth-teal!50}\colorlet{rwth-llteal}{rwth-teal!25}
\definecolor{rwth-gold}{cmyk}{.35,.46,.7,.35}
\definecolor{rwth-silver}{cmyk}{.39,.31,.32,.14}
\newcommand{\bigmid}{\mathrel{\big|}}
\newcommand{\CH}{{\mathcal H}}
\newcommand{\FOERM}{\textup{\sc FO-Erm}}
\newcommand{\RFOERM}[1]{\textup{\sc $(#1)$-FO-Erm}}
\title{On the Parameterized Complexity of\\ Learning First-Order Logic}
\titlerunning{On the Parameterized Complexity of
  Learning First-Order Logic} 
\author{Steffen van Bergerem}{RWTH Aachen University,
Germany}{vanbergerem@informatik.rwth-aachen.de}{https://orcid.org/0000-0002-5212-8992}{}
\author{Martin Grohe}{RWTH Aachen University,
Germany}{grohe@informatik.rwth-aachen.de}{https://orcid.org/0000-0002-0292-9142}{}
\author{Martin Ritzert}{RWTH Aachen University,
Germany}{ritzert@informatik.rwth-aachen.de}{https://orcid.org/0000-0002-5322-3684}{}
\authorrunning{S. van Bergerem, M. Grohe, and M. Ritzert}
\keywords{first-order query learning,
agnostic probably approximately correct learning,
nowhere dense graph,
parameterized complexity,
fixed-parameter tractable,
Boolean classification,
supervised learning,
first-order logic
}
\begin{document}

\maketitle

\begin{abstract}
  We analyse the complexity of learning first-order queries
  in a model-theoretic framework for supervised learning introduced by
  (Grohe and Turán, TOCS 2004). Previous research on the complexity of
  learning in this framework focussed on the question of when learning
  is possible in time sublinear in the background structure.

  Here we study the parameterized complexity of the learning problem.
  We have two main results. The first is a hardness result,
  showing that learning first-order queries is at least as hard as
  the corresponding model-checking problem, which implies that on
  general structures it is
  hard for the parameterized complexity class \AWstar{}.
  Our second main contribution is a fixed-parameter tractable
  agnostic PAC learning algorithm for first-order queries over sparse
  relational data (more precisely, over nowhere dense background structures).
\end{abstract}

\section{Introduction}\label{sec:intro}

We study the complexity of learning first-order queries from
examples. This is a Boolean classification problem where hypotheses
are specified by formulas of first-order logic over finite structures.
In Boolean classification problems, the goal is to learn an unknown
Boolean function $f^*:\X\to\{0,1\}$, the \emph{target function},
defined on an \emph{instance space} $\X$, from a sequence
$(x_1,\lambda_1),\ldots,(x_m,\lambda_m)\in\X\times\{0,1\}$ of
\emph{labelled examples}, where (in the simplest setting) for all $i$
the label $\lambda_i$ is the target value $f^*(x_i)$. Given the
labelled examples, the learner computes a hypothesis $h:\X\to\{0,1\}$
that is supposed to be close to the target function $f^*$.  To make
this problem feasible at all, one usually assumes that the hypothesis
is from some restricted \emph{hypothesis class}~$\mathcal
H$. If the target function is also from this class, we call the learning
problem \emph{realisable}, but we do not insist on this.
We mainly frame our results in Valiant's
\emph{probably approximately correct (PAC) learning} model
\cite{Valiant_PAC} and the more general \emph{agnostic PAC learning}
model \cite{Haussler}. In the latter, we do not require a fixed target
function and instead just assume that there is an unknown probability
distribution $\mathcal D$ on $\X\times\{0,1\}$. Then the learner's
goal is to find a hypothesis $h$ in the hypothesis class that
minimises the probability
$\Pr_{(x,\lambda)\sim\mathcal D}\big(h(x)\neq \lambda\big)$ of being wrong on
a random (unseen) example. This probability is known as the
\emph{generalisation error} (or \emph{risk}) of the hypothesis. It is
known that, information theoretically, PAC learning and agnostic PAC
learning are possible if and only if the hypothesis class $\mathcal H$
has bounded VC dimension \cite{bluehrhau+89,vapche71}. It is also
known that if this is the case, then we can cast the algorithmic
problem as a minimisation problem known as \emph{empirical risk
  minimisation}: find the hypothesis $h\in\mathcal H$ that minimises
the \emph{training error} (a.k.a. \emph{empirical risk}), that is, the
fraction of falsely classified training examples. An
approximate minimisation with an additive error $\epsilon$ is
sufficient for an (agnostic) PAC learning algorithm.

After this very brief review of some necessary definitions and facts
from algorithmic learning theory (for more background, see
\cref{sec:learningProblem} and
\cite{KearnsVazirani,Shalev-Shwartz:2014:UML:2621980}), let us now
describe the logical setting from
\cite{GroheTuran_Learnability,GroheRitzert_FO} that we consider
here. In this setting, we want to learn a first-order query from
positive and negative examples---tuples that are in the query answer
and tuples that are not---over a relational database instance $D$,
called the \emph{background structure} in
\cite{GroheTuran_Learnability,GroheRitzert_FO}. Formally, our
hypotheses are specified by first-order formulas
$\phi(\bar x\smid \bar y)$ together with tuples $\bar w$ of elements
of $D$. We can think of these elements as constants used in the
query. Listing them explicitly as ``parameters'' of the query, rather
than implicitly as part of the formula $\phi$, allows for a cleaner
complexity analysis. The formula $\phi(\bar x\smid\bar y)$ together
with the parameter tuple $\bar w$ specifies a Boolean function
$h_{\phi,\bar w}$ defined by $h_{\phi,\bar w}(\bar v)=1$ if
$D\models\phi(\bar v\smid \bar w)$ and $h_{\phi,\bar w}(\bar v)=0$
otherwise.

Previous work in this logical learning framework focussed on learning
algorithms running in time sublinear in the database $D$ (see the
``Related Work'' section below). Here, we analyse the parameterized
complexity.

\subparagraph*{Our Contributions}
The algorithmic problem we study is the empirical risk minimisation
problem $\FOERM$ in the logical framework: we are given a database instance $D$
and a sequence $\Lambda$ of training examples of the form
$(\bar v,\lambda)$, where $\bar v$ is a $k$-tuple of elements of $D$
and $\lambda\in\{0,1\}$, and the goal is find a hypothesis of the form
$h_{\phi,\bar w}$ for a first-order formula $\phi(\bar x\smid\bar y)$ and a
tuple $\bar w$ that minimises the fraction of examples
$(\bar v,\lambda)$ from $\Lambda$ where
$h_{\phi,\bar w}(\bar v)\neq \lambda$. We restrict the hypothesis
space by giving a bound $q$ on the quantifier rank of $\phi$ and a
bound $\ell$ on the length of $\bar w$.  Furthermore, we allow for an
additive error $\epsilon>0$ in the minimisation.
Provided the sequence
$\Lambda$ of training examples is sufficiently long (logarithmic in
the size of $D$ or linear in the VC dimension of the hypothesis
space), this empirical risk minimisation problem is equivalent to PAC
learning; see the discussion in \cref{sec:learningProblem}.
In addition to the additive approximation error $\epsilon$, we also
allow for a relaxation in the quantifier rank and parameter number of the
output hypothesis. We only require them to be bounded in terms of
functions $L,Q$ of $k,\ell,q$ (where we always assume
$L\ge \ell$ and $Q\ge q$). The resulting version of the problem is denoted by
\RFOERM{L,Q}. Note that this relaxation strengthens our hardness
result (\cref{theo:hardness}) and weakens the tractability
result (\cref{theo:tractability}).

In our parameterized complexity analysis, we take
$k,\ell,q,\frac{1}{\epsilon}$ as parameters; the input size depends on
the size $n$ of $D$ and the size $m$ of $\Lambda$. Note that \FOERM\ and \RFOERM{L,Q} trivially have
polynomial-time data complexity, or in terms of parameterized complexity
theory, belong to the class XP. The main question
we study here is whether they are
\emph{fixed-parameter tractable}, that is, solvable in time
$f(k,\ell,q,1/\epsilon)\cdot(n+m)^{\bigO(1)}$ for some function~$f$.

Our first result is a hardness result, which states that our learning
problem is at least as hard as the model-checking problem for
first-order logic under suitable parameterized reductions. Since the
model-checking problem is known to be complete for the parameterized
complexity class \AWstar\ (see~\cite{flugro06}), we can state this result as
follows.

\begin{restatable}{theorem}{hardness}
\label{theo:hardness}
\RFOERM{L,Q} is hard for the parameterized complexity
class \AWstar\ under parameterized Turing reductions (for all $L,Q$).
\end{restatable}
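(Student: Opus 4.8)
The plan is to reduce from the parameterized model-checking problem for first-order logic --- given a finite structure $\A$ and an \FO-sentence $\Phi$ with parameter $|\Phi|$, decide whether $\A\models\Phi$ --- which is $\AWstar$-complete by a classical result of Downey, Fellows, and Taylor. It therefore suffices to give a parameterized Turing reduction from this problem to \FOlearn. As a harmless preprocessing step I would put $\Phi$ into prenex normal form $\Phi=Q_1z_1\cdots Q_qz_q\,\delta$ with $\delta$ quantifier-free and $q=\qr(\Phi)$, and (by adding a dummy leading quantifier if necessary) assume that $\Phi$ begins with an existential block $\exists z_1\cdots\exists z_\ell$, so that $\A\models\Phi$ iff there is a tuple $\bar v\in\A^\ell$ with $\A\models\phi(\bar v)$ for the remaining formula $\phi$.

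Given $(\A,\Phi)$ I would construct a \FOlearn-instance as follows. The background structure $\A^{+}$ is $\A$ together with a constant-size gadget: a handful of fresh elements carrying new unary predicates, one of which marks the original universe (so that the quantifiers of $\phi$ can be relativised to $\A$ without increasing the quantifier rank) and the others serving as Boolean ``labels''. The dimension $k$, the number of parameters $\ell$, and the quantifier-rank bound $q'$ are all set to values bounded by functions of $|\Phi|$ (with $\ell$ as above and $q'=q+\bigO(1)$ to absorb the relativisation). The technical core is a set $S$ of labelled examples over $\A^{+}$, of size polynomial in $|\A|$, engineered so that $S$ admits a consistent hypothesis $h_{\psi,\bar v}$ with $\qr(\psi)\le q'$ and $|\bar v|=\ell$ \emph{exactly when} $\A\models\Phi$: in the forward direction one encodes a satisfying assignment of the existential block into the parameter tuple $\bar v$ and lets the alternating quantifiers of $\phi$ be performed inside $\psi$, obtaining a hypothesis consistent with $S$; in the backward direction one argues that, within the prescribed budget, no hypothesis can fit $S$ unless $\A$ actually contains the witness pattern demanded by $\Phi$. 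The reduction then queries the \FOlearn-oracle on this instance (a Turing reduction is the natural formulation, since the learner returns a hypothesis that we verify in polynomial time by evaluating the fixed formula $\psi$ on $\A^{+}$, and we may additionally run the oracle on a symmetric instance built from $\neg\Phi$); this runs in \FPT time and decides $\A\models\Phi$. As the model-checking problem is $\AWstar$-complete, the reduction establishes that \FOlearn is $\AWstar$-hard.

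The main obstacle will be the design of the example set $S$. With only $k$, $\ell$, and the quantifier rank restricted, ``short'' or weakly structured samples are fitted by trivial hypotheses regardless of whether $\A\models\Phi$ --- for instance the formula $x=y$ with a single parameter already realises any one-element positive set --- so $S$ must be rich and rigid: since, over a structure with $\ell$ named parameters, a definable set of quantifier rank $q'$ is a union of boundedly many first-order $q'$-types, the prescribed labelling must be impossible to realise unless the required witness structure is genuinely present in $\A$, while remaining realisable, within the very same budget, once it is. Achieving both simultaneously, with $k,\ell,q'$ bounded by functions of $|\Phi|$, is the crux of the argument, and is precisely why the hardness is routed through an established $\AWstar$-complete model-checking problem rather than reduced directly from a generic $\AWstar$-computation.
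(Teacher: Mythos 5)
Your overall plan---reduce from \FOmc{} and use a \FOlearn{} oracle to decide $\A\models\Phi$---matches the paper's starting point, but the reduction you sketch is not the paper's and, as written, has two genuine gaps. First, the entire technical content is deferred: you never construct the example set $S$, and you yourself identify its construction as ``the crux.'' The difficulty is real and is compounded by the slack built into \FOlearn{}: the oracle is only obliged to reject when no consistent hypothesis of quantifier rank $q$ with $\ell$ parameters exists, yet it may answer positively by returning a hypothesis of quantifier rank $Q(q,k,\ell)$ with $L(\ell,k,q)$ parameters. Hence a single oracle call does not decide ``does a $(q,\ell)$-consistent hypothesis exist,'' and your encoding would have to be robust against this relaxation; you do not address it. Second, your proposed verification step---evaluating the returned formula $\psi$ on $\A^{+}$---is itself first-order model checking with quantifier rank unbounded in the input size and bounded only by the parameter, i.e.\ exactly the \AWstar-hard problem you are reducing from; a parameterized Turing reduction cannot afford to perform it outside the oracle.

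The paper's proof avoids both problems by never evaluating the returned hypothesis and never encoding all of $\Phi$ into one instance. Writing $\phi=\exists x\,\psi(x)$, it calls the oracle $\binom{n}{2}$ times on trivial instances ($k=1$, $\ell=0$, $\Lambda^{+}=\{w\}$, $\Lambda^{-}=\{v\}$, quantifier rank $q-1$) and uses only the accept/reject bit to build an indistinguishability graph on $V(G)$; the connected components of that graph give a partition into boundedly many classes refining the $(q-1)$-types (components absorb the slack from $Q>q$), and running the oracle on $L(0,k,q)+1$ disjoint copies of $G$ absorbs the slack from extra parameters. It then picks one representative per class, hard-codes it via fresh unary relations, and recurses on $\psi(v_i)$, yielding a recursion tree of depth $q$ and branching bounded in $q$, hence $\bigO(g(q)n^2)$ oracle calls in total. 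If you want to salvage your route, you would need both an explicit construction of $S$ and an argument that the $(Q,L)$-relaxation cannot create spurious consistent hypotheses; the paper's recursive quantifier elimination is precisely the device that makes both unnecessary.
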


While the theorem may be unsurprising, its proof is
surprisingly hard. Using a combinatorial trick based on Ramsey's
Theorem, we can compute an equivalence relation with a bounded number
of classes that refines first-order equivalence on the vertices of a graph.
We can do this using a polynomial number of oracle calls to \RFOERM{L,Q}. With
this equivalence relation, we can implement a recursive
fixed-parameter tractable model-checking algorithm.

In view of the hardness theorem, we look at restricted classes of
database instances where the model-checking problem is fixed-parameter
tractable. Without loss of generality, we restrict our attention to
vertex-coloured graphs, because arbitrary relational structures can
easily be encoded as graphs.

It is relatively easy to see that for classes with a fixed-parameter
tractable model-checking problem satisfying mild closure conditions,
the \emph{unary} version of the learning problem (i.e.\ where
\mbox{$k=1$}) is reducible to the model-checking problem and hence
fixed-parameter tractable as well. Unfortunately, the simple reduction
argument breaks down for $k\ge 2$. Therefore, we consider specific
graph classes.
One of the most general family of graph
classes with a tractable first-order model-checking problem is the
family of \emph{effectively nowhere dense graph classes}.
Our second main result states that on those the ERM problem
is fixed-parameter tractable.

\begin{theorem}\label{theo:tractability}
  For all nowhere dense graph classes $\CC$, there are functions $L,Q$ such
  that the restriction of \RFOERM{L,Q} to input graphs from $\CC$ is
  fixed-parameter tractable.
\end{theorem}

The proof of this theorem heavily depends on the characterisation of
nowhere dense classes in terms of the so-called \emph{splitter game}
\cite{grokresie17}. Interestingly, the parameters $\bar w$ in the
hypothesis $h_{\phi,\bar w}$ computed by our algorithm
are essentially the vertices the splitter chooses in her winning
strategy for the game.

\subparagraph*{Related Work}
The model-theoretic learning framework we study here was introduced in
\cite{GroheTuran_Learnability}, where the authors proved
information-theoretic learnability results for both first-order and
monadic second-order logic obtained by restricting the background
structures, for example, to be planar or of bounded tree width. The
algorithmic question was first studied in \cite{GroheRitzert_FO},
where it was proved that on graphs of maximum degree $d$, empirical
risk minimisation for first-order definable hypotheses is possible in
time polynomial in $d$ and the number $m$ of labelled examples the
algorithm receives as input, independently of the size of the
background structure $A$.  This was generalised to first-order logic
with counting \cite{vanBergerem} and with weight aggregation
\cite{vanBergeremSchweikardt}. All these results are mainly
interesting in structures of small, say, polylogarithmic degree,
because there they yield learning algorithms running in time sublinear
in the size of the background structure. It was shown in
\cite{GroheLoedingRitzert_MSO,vanBergerem} that sublinear-time learning is no
longer possible if the degree is unrestricted. To address this issue,
in \cite{GroheLoedingRitzert_MSO} it was proposed to introduce a
preprocessing phase where (before seeing any labelled examples) the
background structure is converted to some data structure that supports
sublinear-time learning later. This model was applied to monadic
second-order logic on strings \cite{GroheLoedingRitzert_MSO} and
trees~\cite{GrienenbergerRitzert_Trees}.

The framework is related to, but different from the framework of
\emph{inductive logic programming} (e.g.~\cite{cohpag95,mug91,mugder94}), which may be viewed as
the classical logic-learning framework.

In the database literature, there are various approaches to learning
queries
\cite{aboangpap+13,DBLP:conf/icdt/Barcelo017,DBLP:conf/edbt/BonifatiCL15,bonciusta16,tencate_et_al:LIPIcs.ICDT.2021.9,DBLP:journals/ml/Haussler89,DBLP:conf/alt/Hirata00,DBLP:conf/icdt/StaworkoW12,DBLP:conf/cp/Willard10}. Many
of these are concerned with active learning scenarios (most recently \cite{tencate_et_al:LIPIcs.ICDT.2021.9}), whereas we are
in a statistical learning setting. Furthermore, most of these results
are concerned with conjunctive queries (or queries outside of the
relational database model), whereas our focus is on full first-order logic.
A related problem that has also
received some attention in the database literature is learning schema
mappings
\cite{DBLP:journals/tods/AlexeCKT11,DBLP:journals/jacm/GottlobS10,DBLP:journals/tods/CateDK13,DBLP:conf/pods/CateK0T18}.

Related logical learning frameworks have also been studied in formal
verification (e.g.~\cite{DBLP:conf/tacas/ChampionC0S18,DBLP:conf/cav/0001LMN14,DBLP:journals/pacmpl/EzudheenND0M18,DBLP:conf/pldi/PadhiSM16,DBLP:conf/pldi/ZhuMJ18}).

 \section{Preliminaries}\label{sec:Prelim}

We use \(\NN{}\) and $\NNpos$ to denote the non-negative and positive
integers and use the shorthand $[m] = \set{1,\dots,m}$.
All variables in this paper denote integers except for \(\epsilon, \delta\)
and their variations such as \(\epsilon^*\) and~\(\delta_0\).

For the ease of presentation, we will state and prove all our results
for (coloured) graphs instead of relational structures (or relational databases). All results
can easily be extended to arbitrary relational structures, either
directly, or indirectly by coding relational structures as graphs.

We consider undirected and simple graphs that may be vertex-coloured.
Formally, we view a graph as a relational structure
$G=\big(V(G),E(G),P_1(G),\ldots,P_\ell(G)\big)$ of some relational vocabulary
\(\tau=\{E,P_1,\ldots,P_\ell\}\) where \(E\) is binary and the \(P_j\)
are unary.  We always assume the edge relation \(E(G)\) to be
symmetric and irreflexive.  If we want to specify the vocabulary
explicitly, we call such a graph a \emph{\(\tau\)-coloured graph}.
For \(\tau' \supseteq \tau\), a \emph{\(\tau'\)-expansion} of a
\(\tau\)-coloured graph \(G\) is a \(\tau'\)-coloured graph \(G'\)
such that \(V(G')=V(G)\), \(E(G')=E(G)\), and \(P(G')=P(G)\) for all
\(P \in \tau\).
The \emph{order} of a graph is the number of its vertices \(|V(G)|\).
A graph \(G'\subseteq G\) is a subgraph of \(G\) if \(V(G')\subseteq V(G) \),
\(E(G')\subseteq E(G)\), and \(P(G') \subseteq P(G)\) for all \(P\in \tau\).
Given a set \(S \subseteq V(G)\), the \emph{induced subgraph} \(G[S]\) is
defined as the graph with nodes \(S\) and all relations restricted to the
nodes in \(S\).
The \emph{distance} \(\dist(u,v)\) between
two nodes \(u,v\in V(G)\) is the length of the shortest path between
$u$ and $v$.
The distance between a node \(u\in V(G)\) and a vector \(\bar v \in \big(V(G)\big)^k\)
is defined as \(\dist(u,\bar v) = \min_{v\in \bar v} \big(\dist(u,v)\big)\).
Similarly, we define the distance between two vectors
\(\bar u \in \big(V(G)\big)^k\) and \(\bar v\in \big(V(G)\big)^\ell\) as
\(\dist(\bar u,\bar v) = \min_{u \in \bar u,\,v\in \bar v}\! \big(\!\dist(u,v)\big)\).
The \(r\)-\emph{neighbourhood} of a vector \(\bar u \in \big(V(G)\big)^k\) is
the set \(\neighbr{G}{\bar{u}} = \set{v \in V(G) \mid \dist(v,\bar u) \leq r}\)
of nodes up to a distance of \(r\) from \(\bar u\).
The \(r\)-neighbourhood of a set \(S\subseteq V(G)\)
is \(\neighbr{G}{S} = \set{u \in V(G) \mid s\in S,\, \dist(u,s) \leq r}\).
The graph \(\Neighbr{G}{\bar{u}} = G[\neighbr{G}{\bar{u}}]\)
induced from the neighbourhood of \(\bar u\) is called the (induced)
\(r\)-neighbourhood graph.
The following lemma is an easy consequence of the so-called Vitali
Covering Lemma.

\begin{restatable}{lemma}{vitali}
\label{lem:vitali}
  Let \(G\) be a graph, \(X \subseteq V(G)\), and \(r \ge 1\).
  Then there is a set \(Z \subseteq X\) and an \(R=3^ir\), where \(0 \leq i \leq {\abs{X}-1}\), such that
  \begin{romanenumerate}
    \item \(\neighb{R}{G}{z} \cap \neighb{R}{G}{z'} = \emptyset\)
    for all distinct \(z, z' \in Z\) and
    \item \(\neighbr{G}{X} \subseteq \neighb{R}{G}{Z}\).
  \end{romanenumerate}
\end{restatable}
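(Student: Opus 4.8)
The plan is to run the standard greedy Vitali selection, but with the twist that instead of a fixed radius we allow the radius to grow by factors of $3$ so that at each stage the selected balls either already dominate everything or we can safely add a new centre. Concretely, define radii $r_i = 3^i r$ for $i \ge 0$. I would build, by induction on $i$, a nested sequence of sets $Z_0 \supseteq \emptyset$ — wait, rather an \emph{increasing} sequence $Z_0 \subseteq Z_1 \subseteq \cdots$ of subsets of $X$ together with the invariant that at stage $i$ the balls $\set{\neighb{r_i}{G}{z} : z \in Z_i}$ are pairwise disjoint. Start with $Z_0 = \set{x_0}$ for an arbitrary $x_0 \in X$ (if $X = \emptyset$ the statement is trivial with $Z = \emptyset$). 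At stage $i$, check whether $\neighbr{G}{X} \subseteq \neighb{r_i}{G}{Z_i}$; if so, we are done with $R = r_i$ and $Z = Z_i$. Otherwise there is a vertex $x \in X$ with $\neighb{r}{G}{x} \not\subseteq \neighb{r_i}{G}{Z_i}$, equivalently $\dist(x, Z_i) > r_i - r \ge r_i - r_i/3$, hence $\dist(x, z) > r_i$ for every $z \in Z_i$ actually we need the cleaner bound $\dist(x,z) \ge r_i + 1 > 2r_i$? Here is where the factor $3$ enters: if $x$ is not yet covered at radius $r_i$ then $\dist(x, Z_i) > r_i$, wait that is only $\ge r_i + 1$, which is not enough for disjointness of radius-$r_i$ balls around $x$ and around old centres. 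So instead: the set $Z_i$ was chosen so that radius-$r_{i-1}$ balls are disjoint; when we pass to radius $r_i = 3 r_{i-1}$ and find an uncovered $x$, we know $\dist(x, Z_i) > r_i$ is false in general, so the correct move is to \emph{replace} $Z_i$: let $Z_{i+1}$ be a maximal subset of $X$ that is $(2r_i + 1)$-separated (pairwise distances $> 2r_i$), chosen greedily. Then the radius-$r_i$ balls around $Z_{i+1}$ are pairwise disjoint (invariant at stage $i+1$ with the roles shifted), and by maximality every $x \in X$ has some $z \in Z_{i+1}$ with $\dist(x,z) \le 2r_i$, so $\neighb{r}{G}{x} \subseteq \neighb{2r_i + r}{G}{z} \subseteq \neighb{3r_i}{G}{z} = \neighb{r_{i+1}}{G}{z}$, giving $\neighbr{G}{X} \subseteq \neighb{r_{i+1}}{G}{Z_{i+1}}$.

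The point is that this second description gives both conclusions \emph{simultaneously} at radius $R = r_{i+1} = 3^{i+1} r$: the greedy maximal $(2r_i{+}1)$-separated subset $Z \subseteq X$ has disjoint $R/3$-balls — hence \emph{a fortiori} we need disjoint $R$-balls, so one has to be slightly more careful — let me take the separation to be $> 2R$ instead. So: for a target radius $R$, let $Z$ be a greedily chosen maximal subset of $X$ with pairwise distances $> 2R$. Disjointness of the $\neighb{R}{G}{z}$ is then immediate from the triangle inequality. The covering condition $\neighbr{G}{X} \subseteq \neighb{R}{G}{Z}$ holds iff every $x \in X$ is within distance $R - r$ of $Z$, which by maximality of $Z$ holds whenever $R - r \ge 2R$, i.e.\ never — so a single fixed $R$ cannot work, and the exponential ladder is genuinely needed. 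The clean statement is: run the construction at radii $R_i = 3^i r$; at step $i$, take $Z_i \subseteq X$ maximal with pairwise distances $> 2 R_i / 3 = 2 R_{i-1}$; stop at the first $i$ where $\neighbr{G}{X} \subseteq \neighb{R_i}{G}{Z_i}$. I claim this happens for some $i \le |X| - 1$, because $|Z_i|$ is nonincreasing along the hmm, that is not obviously true either. The honest termination argument is a counting/potential argument on $|X \setminus \neighb{R_i}{G}{Z_i}|$ or on $|Z_i|$: since distinct separated centres at larger radius remain separated, one shows $|Z_0| \ge |Z_1| \ge \cdots$, and each strict-inclusion-failure step forces $|Z_{i+1}| < |Z_i|$; as $|Z_0| \le |X|$, after at most $|X| - 1$ steps we reach $i$ with $|Z_i| = 1$ or with the covering condition satisfied, and a singleton $Z_i$ trivially covers everything once $R_i$ is large enough (indeed once $R_i \ge \mathrm{diam}$ of the relevant part), which is guaranteed within $|X|$ doublings — wait, $3$-ings.

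The main obstacle is getting the bookkeeping exactly right so that the \emph{same} set $Z$ and the \emph{same} radius $R$ witness both (i) and (ii), and pinning down why the index $i$ never exceeds $|X| - 1$. The resolution, which I would write out carefully, is: at stage $i$ let $Z_i$ be a \emph{greedily maximal} $>2R_{i-1}$-separated subset of $X$ (with $R_{-1} := r/3$, or just start the greedy process from scratch), \emph{built so that $Z_{i+1} \subseteq Z_i$} — pick $Z_0$ greedily, then at each stage \emph{delete} from $Z_i$ points that are now within $2R_i$ of an earlier-kept point, processing $Z_i$ in its fixed greedy order; this makes $Z_{i+1} \subseteq Z_i$ automatic and keeps maximality. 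Then either the process stabilises ($Z_{i+1} = Z_i$), in which case $Z_i$ is $>2R_i$-separated \emph{and} $2R_{i-1}$-maximal, and one checks directly that $\neighbr{G}{X}\subseteq\neighb{R_i}{G}{Z_i}$ — actually the stabilisation is exactly the condition $X \subseteq \neighb{2R_{i-1}}{G}{Z_i}$ which gives (ii) at radius $R = 3 R_{i-1} = R_i$ via $\neighb{r}{G}{x}\subseteq\neighb{2R_{i-1}+r}{G}{z}\subseteq\neighb{R_i}{G}{z}$ — or $|Z_i|$ strictly decreases, which can happen at most $|Z_0| - 1 \le |X| - 1$ times. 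At stabilisation time $i$ we output $Z = Z_i$ and $R = R_i = 3^i r$ with $0 \le i \le |X| - 1$; disjointness (i) follows because $Z_i$ is $>2R_i$-separated, and covering (ii) as just described. Once this structure is fixed the remaining work is the routine triangle-inequality inequalities, which I would not belabour.
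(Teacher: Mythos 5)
Your overall strategy --- greedy maximal separated subsets of $X$ at radii growing by factors of $3$, with termination forced by a strictly decreasing cardinality --- is the same as the paper's, but the construction you finally settle on contains a step that fails. You build $Z_{i+1}$ from $Z_i$ by deleting the points of $Z_i$ that lie within $2R_i$ of an earlier-kept point, and you assert that this ``keeps maximality'', i.e.\ that $Z_{i+1}$ is still a maximal $>2R_i$-separated subset of \emph{all of} $X$. That is false: a point of $X$ discarded at an earlier stage because it was close to some $y\in Z_i$ may at stage $i+1$ be $>2R_i$-far from every survivor, even though $y$ itself is now deleted. Concretely, take $X=\set{x_1,x_2,x_3}$ processed in this order with $\dist(x_2,x_3)\le 2R_{i-1}$, $2R_{i-1}<\dist(x_1,x_2)\le 2R_i$ and $\dist(x_1,x_3)>2R_i$ (realisable on a path by the triangle inequality). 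Then $Z_i=\set{x_1,x_2}$ is maximal at threshold $2R_{i-1}$, your deletion gives $Z_{i+1}=\set{x_1}$, yet $\set{x_1,x_3}$ is $>2R_i$-separated, so $Z_{i+1}$ is not maximal in $X$ and $x_3$ is not within $2R_i$ of $Z_{i+1}$. Since your argument for property (ii) is exactly ``maximality within $X$ gives $\dist(x,Z_i)\le 2R_{i-1}$ for all $x\in X$'', it rests on this false claim; you also misattribute (ii) to the stabilisation condition $Z_{i+1}=Z_i$, which only says that $Z_i$ is $>2R_i$-separated and carries no covering information. (There is additionally an off-by-$2r/3$ problem in the base case coming from $R_{-1}=r/3$: a maximal $>2r/3$-separated set need not satisfy (ii) at radius $R_0=r$ when $r\ge 2$.)

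The repair is the argument you sketch in the middle of the proposal and then abandon, and it is what the paper does. Keep the nestedness $Z_0\supset Z_1\supset\cdots$ (you genuinely need it for the bound $i\le\abs{X}-1$), start from $Z_0=X$, and take as the induction invariant the covering statement itself, $\neighbr{G}{X}\subseteq\neighb{R_i}{G}{Z_i}$, rather than maximality within $X$. The invariant propagates because maximality of $Z_{i+1}$ \emph{within $Z_i$} already gives, for every deleted $y\in Z_i\setminus Z_{i+1}$, a survivor $z\in Z_{i+1}$ with $\neighb{R_i}{G}{y}\cap\neighb{R_i}{G}{z}\neq\emptyset$, hence $\neighb{R_i}{G}{y}\subseteq\neighb{3R_i}{G}{z}=\neighb{R_{i+1}}{G}{z}$ and so $\neighb{R_i}{G}{Z_i}\subseteq\neighb{R_{i+1}}{G}{Z_{i+1}}$. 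Propagating ball containments instead of point-to-set distances also means the additive $+r$ is absorbed once at stage $0$ rather than accumulating along the chain. With these changes your proof coincides with the one in the paper.
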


\begin{proof}
  For \(i \geq 0\), let \(R_i \coloneqq 3^ir\).
  We inductively construct a sequence \(Z_0 \supset Z_1 \supset \dots \supset Z_p\)
  of subsets of \(X\) such that for every \(i\) we have \(\neighbr{G}{X} \subseteq \neighb{R_i}{G}{Z_i}\)
  and for \(p\) we additionally have \(\neighb{R_p}{G}{z} \cap \neighb{R_p}{G}{z'} = \emptyset\) for all distinct \(z, z' \in Z_p\).
  Then \(p \leq |X|-1\) and \(Z_p\) is the desired set.

  Let \(Z_0 \coloneqq X\).
  Now suppose that \(Z_i\) is defined.
  If \(\neighb{R_i}{G}{z} \cap \neighb{R_i}{G}{z'} = \emptyset\) for all distinct \(z, z' \in Z_i\),
  we set \(p \coloneqq i\) and stop the construction.
  Otherwise, let
  \(Z_{i+1} \subseteq Z_i\) be inclusion-wise maximal such that
  \(\neighb{R_i}{G}{z} \cap \neighb{R_i}{G}{z'} = \emptyset\) for all distinct \(z, z' \in Z_{i+1}\).
  Clearly \(Z_{i+1} \subset Z_{i}\), as otherwise we would have stopped the construction.
  Then for all \(y \in Z_i\) there is a \(z \in Z_{i+1}\) such that
  \(\neighb{R_i}{G}{y} \cap \neighb{R_i}{G}{z} \neq \emptyset\),
  which implies that \(\neighb{R_i}{G}{y} \subseteq \neighb{3R_i}{G}{z} = \neighb{R_{i+1}}{G}{z}\).
  Thus, \(\neighbr{G}{X} \subseteq \neighb{R_i}{G}{Z_i} \subseteq \neighb{R_{i+1}}{G}{Z_{i+1}}\).

  In the worst case, all pairs \(Z_i,Z_{i+1}\) differ by exactly one element and we actually need \(p=|X|-1\) which implies that \(|Z_p|=1\).
  This case can be reached when each \(x_i\) is at position \(3^{i-1}r\) on a path and \(x_1\in Z_i\) for all \(i\).
\end{proof}

\subparagraph*{Nowhere Dense Graphs}
Let \(G\) be a graph and \(r,s > 0\).
The \emph{\((r,s)\)-splitter game} on \(G\) is played by two players called
\emph{Connector} and \emph{Splitter}.
The game is played in a sequence of at most \(s\) \emph{rounds}.
We let \(G_0 \coloneqq G\).
In round \(i+1\) of the game, Connector picks a vertex \(v_{i+1} \in V(G_i)\),
and Splitter answers by picking \(w_{i+1} \in \neighbr{G_i}{v_i}\).
We let \(G_{i+1} \coloneqq G_i[\neighbr{G_i}{v_{i+1}} \setminus \{w_{i+1}\}]\).
Splitter \emph{wins} if \(G_{i+1} = \emptyset\); otherwise the play continues.

\begin{fact}[\cite{grokresie17}]
  A class \(\CC\) of graphs is nowhere dense if for every \(r > 0\) there is an \(s > 0\)
  such that for every \(G \in \CC\),
  Splitter has a winning strategy for the \((r,s)\)-splitter game on \(G\).
\end{fact}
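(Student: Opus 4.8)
The plan is to establish the full characterisation of Grohe, Kreutzer, and Siebertz~\cite{grokresie17}: the implication stated above is its easy direction, but since the proof of Theorem~\ref{theo:tractability} uses the converse, I will treat both. The organising dichotomy is that $\C$ fails to be nowhere dense precisely when it is \emph{somewhere dense}, i.e.\ when there is an $r_0 \ge 0$ such that for every $t$ some $G \in \C$ has $K_t$ as a depth-$r_0$ minor: pairwise disjoint vertex sets $V_1, \dots, V_t$, each inducing a connected subgraph that contains a vertex $c_i$ from which all of $V_i$ lies within distance $r_0$, and with an edge of $G$ between $V_i$ and $V_j$ for all $i \ne j$. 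To get the stated implication I will argue its contrapositive. Assume $\C$ is somewhere dense with parameter $r_0$, put $r \coloneqq 3r_0 + 1$, fix a target number $s$ of rounds, and take $G \in \C$ with branch sets $V_1, \dots, V_{s+1}$ witnessing $K_{s+1}$ as a depth-$r_0$ minor. Connector always moves to a centre $c_i$ of a branch set still fully present in the current game graph $G_j$, maintaining the invariant that after round $j$ at least $s+1-j$ of the branch sets are intact; the point is that from such a centre $c_i$ every vertex of every surviving branch set is reachable by a path of length at most $3r_0+1$ running entirely inside the surviving branch sets, hence inside $G_j$, so all surviving branch sets lie in $\neighbr{G_j}{c_i}$ and survive into the next round apart from the at most one that contains Splitter's vertex. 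After $s$ rounds a branch set survives, so $G_s \ne \emptyset$ and Splitter has not won the $(r,s)$-game; thus for this $r$ no $s$ works for all of $\C$, which is the contrapositive of what was claimed.

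For the converse — the real content — I will fix $r \ge 1$ and reduce to a combinatorial statement. Encode Splitter's task by the \emph{splitter rank} $\mathrm{rk}_r(G)$, equal to $0$ if $V(G) = \emptyset$ and otherwise to $1 + \max_{v \in V(G)} \min_{w \in \neighbr{G}{v}} \mathrm{rk}_r\!\big(G[\neighbr{G}{v}\setminus\set{w}]\big)$; this recursion is well founded because $w$ is deleted, and unfolding the game rules shows that Splitter wins the $(r,s)$-splitter game on $G$ if and only if $\mathrm{rk}_r(G) \le s$. So it suffices to bound $\mathrm{rk}_r$ uniformly over $\C$, and the heart of the argument is the lemma of~\cite{grokresie17}: there are a function $d \mapsto g(d)$ and bounds $s(r,t)$ such that if $K_t$ is not a depth-$g(r)$ minor of $G$ then $\mathrm{rk}_r(G) \le s(r,t)$. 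Granting this, nowhere density of $\C$ yields, for the fixed $r$, a value $t$ such that no $G \in \C$ has $K_t$ as a depth-$g(r)$ minor, and then $s \coloneqq s(r,t)$ is the bound required by the characterisation.

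To prove the lemma I will induct on $t$. For $t \le 2$ it is immediate: there the relevant neighbourhoods are edgeless, so a single Splitter move empties the graph. For the inductive step, suppose $K_{t+1}$ is not a depth-$g(r)$ minor of $G$ and Connector has moved to $v$; it is then enough to exhibit $w \in \neighbr{G}{v}$ for which $K_t$ is not a depth-$g(r)$ minor of $G[\neighbr{G}{v}\setminus\set{w}]$, since the induction hypothesis bounds the splitter rank of that graph by $s(r,t)$, whence $\mathrm{rk}_r(G) \le 1 + s(r,t) =: s(r,t+1)$. Suppose no such $w$ exists. Then for every $w \in \neighbr{G}{v}$ — in particular for $w = v$ — the graph $G[\neighbr{G}{v}\setminus\set{w}]$ carries a $K_t$-minor at depth $g(r)$ whose $t$ branch sets lie inside the $r$-ball $\neighbr{G}{v}$, and I would aim to contradict the hypothesis by adjoining to these a new branch set formed from $v$ together with short paths from $v$ to each of the $t$ branch sets. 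I expect this to be the main obstacle: the short paths need not be internally disjoint and may run through the other branch sets, so turning the picture into a genuine depth-$g(r)$ minor $K_{t+1}$ of $G[\neighbr{G}{v}]$ requires rerouting and pruning both the paths and the branch sets — this is where the blow-up from $r$ to $g(r)$ arises, and where the quantitative combinatorics of excluded shallow clique minors from~\cite{grokresie17} is indispensable. The remaining two ingredients, the contrapositive above and the splitter-rank reduction, are routine.
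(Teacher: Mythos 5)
First, note that the paper does not prove this Fact at all: it is imported as a black box from~\cite{grokresie17}, so your proposal can only be measured against the original argument there. Your treatment of the implication as literally stated (uniform Splitter wins $\Rightarrow$ $\C$ nowhere dense) via the contrapositive is correct and complete: Connector sits on the centre of a still-intact branch set of a depth-$r_0$ clique minor, the choice $r = 3r_0+1$ guarantees that every surviving branch set lies inside $\neighbr{G_j}{c_i}$ via paths through at most two branch sets and one crossing edge, and Splitter's single deletion ruins at most one branch set per round, so $G_s \neq \emptyset$. That part is fine, and you are also right that the paper really relies on the converse direction.

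The gap is precisely in that converse. Your reduction to bounding the splitter rank is unobjectionable, but the inductive step of the key lemma is not a proof: from ``for every $w \in \neighbr{G}{v}$ the graph $G[\neighbr{G}{v}\setminus\set{w}]$ contains $K_t$ as a depth-$g(r)$ minor'' you want to assemble a depth-$g(r)$ minor $K_{t+1}$ in $G$ by adjoining $v$ together with short paths to the $t$ branch sets, and, as you yourself observe, those paths need not be internally disjoint and may run through the other branch sets. This is not a technicality that can be deferred: any naive rerouting inflates the radius of the branch sets, so the excluded-minor depth would have to grow with each level of the induction, and the entire quantitative bookkeeping --- which \emph{is} the content of the lemma --- is absent. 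Deferring it to ``the quantitative combinatorics of excluded shallow clique minors from~\cite{grokresie17}'' is circular, since that is the result being proved. The argument in~\cite{grokresie17} is in fact organised differently: it proves the bound for a variant of the game in which Splitter may delete a bounded \emph{set} of vertices in each round (which is what makes the inductive step go through) and then recovers the single-deletion game by simulating one multi-deletion round over several rounds, using that the arena only shrinks. None of this machinery appears in your sketch, so while the easy direction stands, the direction the paper actually uses remains unproven.
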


In a modified version of the game, in each round Connector
not only picks a vertex \(v\), but also a new radius \(r' \leq r\),
and the game continues in \(N_{r'}(v)\).
Clearly, reducing the radius does not help Connector, and the fact remains true for the same \(s\).
Nevertheless, it will be convenient for us later to work with this modified version of the game.

A graph class is \emph{effectively nowhere dense} if it is nowhere dense
and $s$ is given by a computable function $f(r)$.
In this paper, we consider only effectively nowhere dense graph
classes.

\subparagraph*{Types}
Let us fix a vocabulary \(\tau\) (of coloured graphs) in the following and assume that all graphs are \(\tau\)-coloured.
\(\FO[\tau]\) denotes the set of all formulas of first-order logic of vocabulary \(\tau\),
and \(\FO[\tau,q]\) denotes the subset of \(\FO[\tau]\)
consisting of all formulas of quantifier rank up~to~\(q\).

The \emph{\(q\)-type} of a tuple \(\bar{v} = (v_1, \dots, v_k) \in \big(V(G)\big)^k\) of arity \(k\)
is the set \(\type{q}{G}{\bar{v}}\) of all \(\FO[\tau,q]\)-formulas \(\phi(\bar{x})\)
such that \(G \models \phi(\bar{v})\).
Moreover, for \(q,r \geq 0\), the \emph{local \((q,r)\)-type} of \(\bar{v}\) is the set
\(\ltype{q}{r}{G}{\bar{v}} \coloneqq \type{q}{\Neighbr{G}{\bar{v}}}{\bar{v}}\).
A \emph{\(k\)-variable \(q\)-type (of vocabulary \(\tau\))}
is a set \(\theta\) of \(\FO[\tau,q]\)-formulas whose free variables are among \(x_1, \dots, x_k\).
Since, up to logical equivalence, there are only finitely many \(\FO[\tau,q]\)-formulas whose free variables are among \(x_1, \dots, x_k\), we can view types as finite sets of formulas.
Formally, we syntactically define a normal form
such that for all \(\tau, q, k\),
there are only finitely many \(\FO[\tau,q]\)-formulas in normal form
with free variables among \(x_1, \dots, x_k\).
Furthermore, there is an algorithm that transforms every formula
into an equivalent formula in normal form
without increasing the quantifier rank.
Then, we view types as sets of formulas in normal form.
We denote the set of all \(k\)-variable \(q\)-types of vocabulary \(\tau\)
by \(\Types{\tau}{k}{q}\).

It is easy to see that for every
\(\FO[\tau,q]\)-formula \(\phi(x_1, \dots, x_k)\),
there is a set \(\Phi \subseteq \Types{\tau}{k}{q}\)
such that for all \(\tau\)-coloured graphs \(G\)
and all \(\bar{v} \in \big(V(G)\big)^k\),
\[
G \models \phi(\bar{v}) \iff \type{q}{G}{\bar{v}} \in \Phi.
\]

The following fact is a consequence of Gaifman's Theorem~\cite{Gai82}.

\begin{fact}\label{fact:local}
  For all \(q \geq 0\), there is an \(r \coloneqq r(q) \in 2^{\bigO(q)}\)
  such that for all \(k \geq 1\), all vocabularies \(\tau\),
  all \(\tau\)-coloured graphs \(G\), and all \(\bar{v},\bar{v}' \in \big(V(G)\big)^k\),
  if\/ \(\ltype{q}{r}{G}{\bar{v}} = \ltype{q}{r}{G}{\bar{v}'}\),
  then \(\type{q}{G}{\bar{v}} = \type{q}{G}{\bar{v}'}\).
\end{fact}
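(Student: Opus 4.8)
The plan is to deduce the statement from Gaifman's Locality Theorem~\cite{Gai82}, the crucial point being that \(\bar v\) and \(\bar v'\) lie in the \emph{same} graph \(G\). First note that for a \(\tau\)-coloured graph the Gaifman graph is \(G\) itself, since the unary colour predicates contribute no edges, so the distance \(\dist\) used in the statement is exactly Gaifman distance. By Gaifman's theorem, every \(\phi(x_1,\dots,x_k)\in\FO[\tau,q]\) is logically equivalent to a Boolean combination of (a) \(\ell\)-local formulas \(\psi(x_1,\dots,x_k)\) of quantifier rank at most \(q\) and (b) basic \(\ell\)-local \emph{sentences}, where the locality radius \(\ell\) depends only on \(q\) (one may take \(\ell\le 7^q\)), not on \(k\) or \(\tau\). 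We therefore fix \(r\coloneqq r(q)\coloneqq 7^q\in 2^{\bigO(q)}\) as a uniform locality radius.

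Next assume \(\ltype{q}{r}{G}{\bar v}=\ltype{q}{r}{G}{\bar v'}\); I will show \(G\models\phi(\bar v)\iff G\models\phi(\bar v')\) for every \(\phi\in\FO[\tau,q]\) with free variables among \(x_1,\dots,x_k\), which yields \(\type{q}{G}{\bar v}=\type{q}{G}{\bar v'}\). Fix such a \(\phi\) and put it into Gaifman normal form. A basic \(\ell\)-local sentence \(\chi\) occurring in it has no free variables, so \(G\models\chi\) depends only on \(G\) and contributes the same truth value on the \(\bar v\)-side as on the \(\bar v'\)-side; this is the step that uses that we work inside a single fixed graph. An \(\ell\)-local formula \(\psi(\bar x)\) of quantifier rank at most \(q\) occurring in it satisfies \(G\models\psi(\bar v)\iff\Neighb{\ell}{G}{\bar v}\models\psi(\bar v)\) by \(\ell\)-locality; and since \(\ell\le r\) and every shortest path of length at most \(\ell\) to \(\bar v\) stays inside \(\neighbr{G}{\bar v}\), the \(\ell\)-neighbourhood of \(\bar v\) computed inside \(\Neighbr{G}{\bar v}\) coincides with \(\Neighb{\ell}{G}{\bar v}\), so in fact \(G\models\psi(\bar v)\iff\Neighbr{G}{\bar v}\models\psi(\bar v)\). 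As \(\psi\) has quantifier rank at most \(q\), whether \(\Neighbr{G}{\bar v}\models\psi(\bar v)\) is determined by \(\type{q}{\Neighbr{G}{\bar v}}{\bar v}=\ltype{q}{r}{G}{\bar v}\), and likewise for \(\bar v'\). Hence each constituent of the Boolean combination contributes equally on the two sides, and therefore so does \(\phi\).

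The step that requires the most care is the quantifier-rank bookkeeping just used: the hypothesis only equates the \emph{\(q\)-types} of the \(r\)-balls, hence only controls formulas of quantifier rank at most \(q\), and the argument collapses unless the local formulas produced by Gaifman's normal form can be chosen of quantifier rank at most \(q\). A self-contained way to secure this is to rephrase everything via Ehrenfeucht--Fra\"{\i}ss\'{e} games: \(\ltype{q}{r}{G}{\bar v}=\ltype{q}{r}{G}{\bar v'}\) says that Duplicator wins the \(q\)-round game on \((\Neighbr{G}{\bar v},\bar v)\) and \((\Neighbr{G}{\bar v'},\bar v')\), while \(\type{q}{G}{\bar v}=\type{q}{G}{\bar v'}\) says that Duplicator wins the \(q\)-round game on \((G,\bar v)\) and \((G,\bar v')\). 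One lifts the former win to the latter by the standard locality argument with geometrically shrinking radii \(r=3^{q}>3^{q-1}>\dots\): a move of Spoiler close to the pebbles already placed near \(\bar v\) (resp.\ \(\bar v'\)) is answered using the ball strategy, while a move far from all current pebbles is answered by the \emph{identical} vertex of the other copy --- which is legal precisely because both copies are the same graph \(G\) --- and the shrinking radii guarantee that after \(q\) rounds the two partial maps still form a partial isomorphism on \(1\)-neighbourhoods, hence a partial isomorphism. This form of mirroring is exactly what replaces the neighbourhood-counting hypotheses that a Hanf-type argument would otherwise require, and it is the reason no global assumption on \(G\) is needed.
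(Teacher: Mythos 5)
The paper itself gives no proof of this Fact beyond citing Gaifman's theorem, so the relevant question is whether your self-contained argument actually closes the issue you (correctly) identify: the standard Gaifman normal form does \emph{not} produce local formulas of quantifier rank at most \(q\) from a formula of quantifier rank \(q\) (relativising to a ball alone costs roughly \(\log r\) extra quantifiers), so your first paragraph, as you acknowledge, does not suffice on its own. Everything therefore rests on the Ehrenfeucht--Fra\"{\i}ss\'{e} argument in your last paragraph, and that argument has a genuine gap in its case analysis. Consider a Spoiler move \(c\) in \((G,\bar v)\) that is far from \(\bar v\) and from all pebbles previously placed on the \(\bar v\)-side, but adjacent to some entry \(v'_t\) of \(\bar v'\) (or to some pebble \(b_j\) previously placed on the \(\bar v'\)-side). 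Your strategy answers with the identical vertex \(c\) in \((G,\bar v')\); but then the final map contains \(v_t \mapsto v'_t\) and \(c \mapsto c\) with \(E(c,v'_t)\) holding on the second side and \(E(c,v_t)\) failing on the first, so it is not a partial isomorphism. This move is not covered by your first case either: it need not lie in \(N_r^G(\bar v)\), so it is not a legal Spoiler move in the virtual ball game \(\bigl(N_r^G(\bar v),\bar v\bigr)\) versus \(\bigl(N_r^G(\bar v'),\bar v'\bigr)\), and the given ball strategy cannot answer it.

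Repairing this requires at least a second virtual game played in the reverse direction (answering moves near \(\bar v'\) on the first side by vertices near \(\bar v\) on the second side, which exists by symmetry of the hypothesis), together with a three-group invariant (pebbles matched by the forward ball strategy, pebbles matched by the reverse ball strategy, and identically mirrored far pebbles) and quantitative separation conditions ensuring the groups never interact at distance \(\le 1\). Even then, the case where \(N_r^G(\bar v)\) and \(N_r^G(\bar v')\) intersect --- i.e.\ \(\bar v\) and \(\bar v'\) are close --- is not addressed by any of your cases, and it is precisely where the ``mirror the identical vertex'' idea and the two ball strategies collide; this case needs a separate argument. A smaller but real bookkeeping issue: you assert that the ball strategy returns answers suitably close to the \(\bar v'\)-cluster, but \(\equiv_{q-i}\)-equivalence only preserves distances up to about \(2^{q-i}\), which is smaller than your \(3^{q-i}\) thresholds; one has to argue instead that answers stay inside \(N_r^G(\bar v')\) and force far pebbles to avoid \(N_{r+1}^G(\bar v)\cup N_{r+1}^G(\bar v')\). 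As it stands, neither half of your proposal is a complete proof.
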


It is important for us to note that this \(r\) can be chosen to depend only on \(q\),
independent of the vocabulary \(\tau\).
From \cref{fact:local}, we directly get the following corollary.

\begin{corollary}\label{cor:local}
  Let \(\phi(x_1, \dots, x_k)\) be an \(\FO[\tau,q]\)-formula,
  and let \(r \coloneqq r(q)\) be chosen according to \cref{fact:local}.
  Then for every graph \(G\), there is a set \(\Phi\) of \(k\)-variable \(q\)-types
  such that for all \(\bar{v} \in \big(V(G)\big)^k\),
  \[
    G \models \phi(\bar{v}) \iff \ltype{q}{r}{G}{\bar{v}} \in \Phi.
  \]
\end{corollary}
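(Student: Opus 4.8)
The plan is to reduce to the analogous statement for global \(q\)-types -- which was already observed immediately before \cref{fact:local} -- and then to pass from global types to local types via \cref{fact:local}. Recall that observation: there is a set \(\Psi \subseteq \Types{\tau}{k}{q}\) such that for every \(\tau\)-coloured graph \(G\) and every \(\bar v \in \big(V(G)\big)^k\) we have \(G \models \phi(\bar v)\) if and only if \(\type{q}{G}{\bar v} \in \Psi\). Fixing \(G\), I would then simply set
\[
  \Phi \coloneqq \bigsetc{\ltype{q}{r}{G}{\bar v}}{\bar v \in \big(V(G)\big)^k \text{ and } \type{q}{G}{\bar v} \in \Psi}.
\]
By the definition of \(\ltype{q}{r}{G}{\cdot}\) as a \(q\)-type in a neighbourhood structure, every element of \(\Phi\) is a set of \(\FO[\tau,q]\)-formulas with free variables among \(x_1,\dots,x_k\), i.e.\ a \(k\)-variable \(q\)-type, so \(\Phi\) has the right shape (and, as usual, there are only finitely many such types).

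For correctness, if \(G \models \phi(\bar v)\) then \(\type{q}{G}{\bar v} \in \Psi\), so \(\bar v\) itself witnesses \(\ltype{q}{r}{G}{\bar v} \in \Phi\). Conversely, if \(\ltype{q}{r}{G}{\bar v} \in \Phi\), choose a witnessing tuple \(\bar v'\) with \(\ltype{q}{r}{G}{\bar v'} = \ltype{q}{r}{G}{\bar v}\) and \(\type{q}{G}{\bar v'} \in \Psi\); \cref{fact:local}, applied to the two tuples \(\bar v, \bar v'\) in \(G\), gives \(\type{q}{G}{\bar v} = \type{q}{G}{\bar v'} \in \Psi\), hence \(G \models \phi(\bar v)\).

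I do not expect a real obstacle. The one point worth being careful about is that \cref{fact:local} only compares two tuples living in the \emph{same} graph; this is exactly why it is important that \(\Phi\) is allowed to depend on \(G\). If one instead wanted a single \(\Phi\) working uniformly for all \(G\), the argument above would break, since local \((q,r)\)-types of tuples in different graphs need not determine the same global \(q\)-type -- one would then additionally have to account for the basic local sentences appearing in Gaifman's normal form -- but the corollary as stated does not require this.
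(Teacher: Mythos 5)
Your proof is correct and is exactly the argument the paper intends: the paper gives no explicit proof, merely asserting that the corollary follows ``directly'' from Fact~\ref{fact:local}, and your construction of $\Phi$ as the set of local types realised by tuples whose global type lies in $\Psi$, combined with Fact~\ref{fact:local} for the converse direction, is the standard way to make that derivation precise. Your closing remark correctly identifies why $\Phi$ must be allowed to depend on $G$, which is indeed the one subtlety in the statement.
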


\subparagraph*{Parameterized Complexity}
A \emph{parameterized problem} (over a finite alphabet \(\Sigma\))
is a pair \((L, \kappa)\), where \(L \subseteq \Sigma^*\)
and \(\kappa \colon \Sigma^* \to \NN\) is a polynomial-time computable function,
called \emph{parametrization} of \(\Sigma^*\).
If the parametrization \(\kappa\) is clear from the context, we can omit it.
We say that \((L, \kappa)\) is \emph{fixed-parameter tractable}
or \((L, \kappa) \in \FPT\) if there is an algorithm \(\Algorithm{L}\),
a computable function \(f \colon \NN \to \NN\),
and a polynomial \(p\) such that \(\Algorithm{L}\) decides \(L\)
and runs in time \(f\big(\kappa(x)\big) \cdot p(\abs{x})\) on input \(x \in \Sigma^*\).
The class \FPT{} essentially captures all tractable parameterized
problems.

An \emph{fpt Turing reduction}  from \((L, \kappa)\) to \((L',
\kappa')\) is an fpt algorithm with oracle access to \((L',
\kappa')\) solving \((L, \kappa)\) in such a way that on input $x$,
for all oracle queries $x'$, it holds that $\kappa'(x')\le
g(\kappa(x))$ for some computable function $g$. Clearly, if there is
an fpt Turing reduction from \((L, \kappa)\) to \((L',
\kappa')\) and \((L', \kappa') \in \FPT\) then \((L, \kappa) \in \FPT\).

For additional background, we refer the reader to \cite{dowfel13,flugro06}.

\section{Learning Problems}\label{sec:learningProblem}

Recall the supervised learning scenario we described in the
introduction. We are given a sequence
\( (\bar{v}_1, \lambda_1), \ldots, (\bar{v}_m, \lambda_m) \in V(G)^k
\times \{0,1\} \) of labelled examples over some graph $G$,
which we call the \emph{background graph}.  In the query learning
scenario, the graph represents a database instance.
We assume that the $(\bar{v}_i, \lambda_i)$ are drawn
independently from some unknown ``data generating'' probability
distribution ${\mathcal D}$ on $V(G)^k\times\{0,1\}$.  Note that we
make no assumptions about the distribution $\mathcal D$. In
particular, we do not assume that for every $\bar v$ there is exactly
one (or at most one) $\lambda\in\{0,1\}$ such that
$\mathcal D(\{(\bar v,\lambda)\})>0$. That is, there is not necessarily
an (unknown) target function that assigns a ``true'' value
$\lambda\in\{0,1\}$ to every $\bar v$. Of course there could be such a
function, then essentially $\mathcal D$ would be a distribution on
$V(G)^k$, but our setting is more general and admits uncertainty in
the data. This setting is sometimes referred to as \emph{agnostic learning}.
The special case where an unknown target function exists and is
actually one of the hypotheses is called the \emph{realisable} case;
we shall see in \cref{sec:hardness} that our hardness result even applies
in this special case.

Our hypotheses are first-order queries, formally represented as Boolean functions
$h_{\phi,\bar w}:V(G)^k\to\{0,1\}$ for some first-order formula $\phi(\bar x\smid\bar
y)$ and tuple $\bar w\in V(G)^\ell$.
Our goal is
to generate a hypothesis $h_{\phi,\bar w}$ that generalises well, that is, minimises the
generalisation error
\[
  \Pr_{(\bar v,\lambda)\sim{\mathcal D}}\big(h_{\phi,\bar w}(\bar v)=\lambda\big).
\]

It is well-known that, at least from a theoretical perspective, it
suffices to minimise the \emph{empirical risk} or \emph{training
  error}, that is, the fraction of training examples that are
classified wrong. This paradigm is known as \emph{empirical risk
  minimisation (ERM)}. The two main results relevant for us here are
the following. The \emph{Uniform Convergence Theorem} (see
\cite[Chapter 4]{Shalev-Shwartz:2014:UML:2621980}) assumes that we
have a finite hypothesis class ${\CH}$. It states that for all
$\epsilon,\delta>0$, it suffices to see $\onot(\log|{\CH}|)$
labelled examples to guarantee that with probability at least
$1-\delta$, the training error of a hypothesis $h\in{\CH}$
deviates from the generalisation error by at most $\epsilon$. This
implies that the generalisation error of a hypothesis $h\in{\CH}$
minimising the empirical risk deviates from the generalisation error of
the best possible hypothesis $h^*\in{\CH}$ by at most
$2\epsilon$. In other words: for finite hypothesis classes, ERM is
\emph{probably approximately correct}, or a \emph{PAC learning algorithm}. The second result, sometimes
called the \emph{Fundamental Theorem of PAC Learning} (see
\cite[Chapter 6]{Shalev-Shwartz:2014:UML:2621980}), lifts this to
infinite hypothesis classes ${\CH}$, as long as their \emph{VC
  dimension} $\operatorname{VC}({\CH})$ is finite. It states
that uniform convergence already holds if we see
$\onot\big(\!\operatorname{VC}(|{\CH}|)\big)$ examples, and hence ERM is
probably approximately correct in this more general setting. It will
not be necessary here to understand VC dimension. To tie things together, just note that for a
finite hypothesis class ${\CH}$, the VC dimension is at most
$\log|{\CH}|$.

The uniform convergence results relating the training error and the
generalisation error not only show that an ERM algorithm yields a PAC
learning algorithm, but conversely they also allow us to obtain a
(randomised) ERM algorithm from a PAC learning algorithm, simply by
applying the PAC learning algorithm with a data generating
distribution that is uniform on the training examples (and zero
everywhere else). Thus, ERM and PAC learning are essentially
equivalent, and therefore we focus on ERM from now on.

The hypothesis classes we are interested in here are classes
${\CH}_{k,\ell,q}(G)$, where $G$ is a graph and
$k,\ell,q\in\mathbb N$, consisting of all $h_{\phi,\bar w}$ for
first-order formulas $\phi(\bar x\smid \bar y)$ of quantifier rank $q$
with $|\bar x|=k$, $|\bar y|=\ell$, and tuples
$\bar w\in \big(V(G)\big)^\ell$. Since up to logical equivalence there are only
finitely many first-order formulas of quantifier rank $q$ with
$(k+\ell)$ free variables, these hypothesis classes are finite sets of size
$f(k,\ell,q)\cdot n^\ell$, where $n$ is the order of $G$. Thus, an
ERM-learning algorithm that sees $\onot(\log n)$ training examples is
a PAC learning algorithm. Furthermore, for graphs $G$ from a
nowhere dense class $\mathcal C$ of graphs, the VC dimension of the
hypothesis classes ${\CH}_{k,\ell,q}(G)$ is uniformly bounded
by a constant $d=d(\mathcal C,k,\ell,q)$, and therefore an
ERM-learning algorithm only needs to see $\onot(d)$ (independent of $n$)
examples to be a PAC learning algorithm.

To conclude: the algorithmic problem we need to solve is empirical
risk minimisation. We always denote the background graph by $G$ and its vocabulary by
$\tau$. We assume the input sequence of training examples is
\[
  \Lambda=\Big((\bar{v}_1, \lambda_1), \ldots, (\bar{v}_m, \lambda_m)\Big) \in\Big(V(G)^k
  \times \{0,1\}\Big)^m.
\]
The \emph{training error} of a hypothesis $h$ is
\[
  \err_\Lambda(h)\coloneqq\frac{1}{m}\big|\big\{i\in[m]\bigmid h(\bar
  v_i)\neq\lambda_i\big\}\big|.
\]
For hypotheses $h_{\phi,\bar w}$, we write $\err_\Lambda(\phi,\bar w)$
instead of $\err_\Lambda(h_{\phi,\bar w})$.
Here is the precise statement of the ERM problem.

\begin{problem}{\FOERM}
    \item[Input]
    Graph \(G\), training sequence $\Lambda\in \big(V(G)^k
  \times \{0,1\}\big)^m$, and $k,\ell,q\in\mathbb N$, $\epsilon>0$.
    \item[Parameter]
      \(k + \ell + q + \abs{\tau}+\frac{1}{\epsilon}\)
    \item[Problem]
    Return a hypothesis $h_{\phi,\bar w}\in\CH_{k,\ell,q}(G)$
    such that
    \[
      \err_{\Lambda}(\phi,\bar{w}) \le \epsilon^*+\epsilon,
    \]
    where
    $\epsilon^*\coloneqq\min\{\err_{\Lambda}(h)\mid h\in
    \CH_{k,\ell,q}(G)\}$.
\end{problem}

Note that we only require a hypothesis approximately minimising the
risk, because such an approximation is sufficient for PAC learning.

Unfortunately, in our positive learnability result for nowhere
dense graphs, we cannot
guarantee to match the quantifier rank and parameter number of the
optimal hypothesis, but may return a hypothesis with larger
quantifier rank and parameter number. To make this precise, we
introduce the following relaxation of the problem.
Let \(L,Q \colon \NN^3 \rightarrow \NN\) be functions with
\(L(k,\ell,q) \geq \ell\) and \(Q(k,\ell,q) \geq q\)
for all \(k,\ell,q \in \NN\).

\begin{problem}{\RFOERM{L,Q}}
    \item[Input]
    Graph \(G\), training sequence $\Lambda\in \big(V(G)^k
  \times \{0,1\}\big)^m$, and $k,\ell^*,q^*\in\mathbb N$, $\epsilon>0$.
    \item[Parameter]
      \(k + \ell^* + q^* + \abs{\tau}+\frac{1}{\epsilon}\)
    \item[Problem]
      Return a hypothesis $h_{\phi,\bar w}\in\CH_{k,\ell,q}(G)$ for some
    $q\le Q(k,\ell^*,q^*)$ and $\ell\le L(k,\ell^*,q^*)$
    such that
    \[
      \err_{\Lambda}(\phi,\bar{w}) \le \epsilon^*+\epsilon,
    \]
    where
    $\epsilon^*\coloneqq\min\{\err_{\Lambda}(h)\mid h\in
    \CH_{k,\ell^*,q^*}(G)\}$.
\end{problem}

 \section{The Hardness of Learning}
\label{sec:hardness}

In this section, we shall prove that the ERM problem is hard; at least
as hard as the following \emph{model-checking problem} for first-order
logic.

\begin{problem}{\FOmc}
  \item[Input]
    Graph \(G\), \FO-sentence \(\phi\)
  \item[Parameter]
    \(\abs{\phi}\)
  \item[Problem]
    Decide whether \(G \models \phi\) holds.
\end{problem}

As \FOmc{} is hard for the parameterized complexity class \AWstar{}
\cite{dowfeltay96}, this will prove \cref{theo:hardness}.

\begin{lemma}\label{lem:hardness}
  Let\/ \(L,Q \colon \NN^3 \to \NN\) with
  \(L(k,\ell,q) \geq \ell\) and \(Q(k,\ell,q) \geq q\)
  for all \(k, \ell, q\).
  Then \FOmc{} is fpt Turing reducible to \RFOERM{L,Q}.
\end{lemma}

\begin{proof}
  We show that there is an fpt algorithm with access to a \RFOERM{L,Q}
  oracle that solves \FOmc{}. To turn this
  into an fpt reduction, we need to be careful that the algorithm
  makes only oracle calls with parameter
  $k+\ell^*\!+q^*\!+\abs{\tau}+\frac{1}{\epsilon}\le g(p)$, where $p$ is the
  length of the input formula of the model-checking problem and $g$ is
  some computable function.

  We first give the proof under the assumption $L(1,0,q)=0$ for all
  $q$.

  Let \(G\) be a $\tau$-colored graph of order $n$, and let
  \(\phi\) be an \FO-sentence of length $p$. Let $q$ be the quantifier
  rank of $\phi$. Of course we have $q\le p$, and we may also assume
  that $|\tau|\le p$, because we can ignore relation symbols not
  occurring in $\phi$.

  Without loss of generality, we may assume
  \(\phi = \exists x\, \psi(x)\) for some \FO-formula~\(\psi(x)\) of
  quantifier rank at most \(q-1\).

  Using \(\binom{n}{2}\) calls to \RFOERM{L,Q}, we aim to partition the
  nodes $V(G)$ into a bounded number of classes in such a way that
  nodes in the same class have the same $(q\!-\!1)$-type.
  The number of classes of the partition will be bounded in terms of
  the parameter $p$.

  \medskip
  \begin{claim}\label{claim:hardness-distinguish}
    Let \(u,v \in V(G)\) with \(\type{q-1}{G}{u} \neq
    \type{q-1}{G}{v}\).
    Then \RFOERM{L,Q} on input \(\Lambda = \big((u,0),(v,1)\big)\)
    with
    parameters \(k=1\), \(\ell^*=0\), \(q^*=q-1\) and \(\epsilon=1/4\)
    returns a formula \(\phi'(x)\) of quantifier rank at most \(Q(1,0,q-1)\)
    such that \(G \not\models \phi'(u)\) and \(G \models \phi'(v)\).
  \end{claim}

  \begin{claimproof}
    Let \(\phi^*(x) = \bigwedge_{\gamma \in \type{q-1}{G}{v}} \gamma(x)\),
    which is an \FO-formula of quantifier rank \(q-1\).
    Then, \(G \not\models \phi^*(u)\) and \(G \models \phi^*(v)\).
    Thus, \(\err_{\Lambda}(\phi^*) = 0\).
    Since \(\epsilon = \frac{1}{4}\), \RFOERM{L,Q} has to return a formula
    \(\phi'(x)\) with \(\err_{\Lambda}(\phi') \leq \frac{1}{4}\). Note
    that here we use $L(k,0,q)=0$ to make sure that the formula
    $\phi'$ does not have additional parameters.
    Furthermore, by the definition of the training error,
    \(\err_{\Lambda}(\phi') \in \{0,\frac{1}{2},1\}\).
    Hence,
    \(\err_{\Lambda}(\phi') = 0\)
    and
    \(G \not\models \phi'(u)\), \(G \models \phi'(v)\).
  \end{claimproof}

  The difficulty of the proof is that we do not have a ``converse'' of
  Claim~1. If \(\type{q-1}{G}{u} = \type{q-1}{G}{v}\), then the
  \RFOERM{L,Q}-oracle still returns a formula \(\phi'(x)\), but we
  know nothing about this formula, and we have no way of recognising
  that we are in this case.

  As discussed in the preliminaries, by introducing a suitable normal form,
  we can always assume that the set $\FO[\tau, Q(1,0,q-1)]$ of
  $\FO$-formulas of vocabulary $\tau$ and quantifier rank at most
  $Q(1,0,q-1)$ is finite.
  Let \(s\) be an upper bound for the size of this
  set.
  Note that \(s\) can be bounded in terms of \(p\).

  Our next goal is to compute a ``bounded'' set $T\subseteq V(G)$ of
  representatives of the $(q-1)$-types, that is, we want to guarantee the
  following:
  \begin{romanenumerate}
    \item for every $v\in V(G)$ there is a $t\in T$ such that
      $\type{q-1}{G}{v} = \type{q-1}{G}{t}$;
    \item $|T|\le h(p)$ for some computable function $h$.
  \end{romanenumerate}
  To choose the function $h$ in (ii), we recall
  Ramsey's Theorem (see, for example, \cite{cam94}):
  {\itshape
    there is a computable function $R:\mathbb N^3\to\mathbb N$ such
    that for all $k,\ell,m$, every set
    $S=\{s_1,\ldots,s_{r}\}$ of size $r> R(k,\ell,m)$, and every mapping
    $C:\binom{S}{k}\to [\ell]$, there is a subset $M\subseteq S$ of
    size $|M|=m$ that is monochromatic, that is, the restriction of
    $C$ to $\binom{M}{k}$ is constant.
  }
  We let $h(p)\coloneqq R(2,s,3)$.

  We fix an arbitrary linear order $<$ on $V(G)$.
  For each
  unordered pair $\{u,v\}\in\binom{V(G)}{2}$, say with $u<v$, we
  compute an FO-formula $\gamma_{u,v} = \gamma_{u,v}(x)$
  of quantifier rank at most $Q(1,0,q\!-\!1)$ by calling \RFOERM{L,Q} on input
  $G$ and $\Lambda = \big((u,0),(v,1)\big)$ with parameters \(k=1\), \(\ell^*=0\),
  \(q^*=q-1\) and \(\epsilon=1/4\). By \cref{claim:hardness-distinguish},
  if \(\type{q-1}{G}{u}\neq\type{q-1}{G}{v}\), we have
  $G\not\models\gamma_{u,v}(u)$ and $G\models\gamma_{u,v}(v)$.

  \medskip
  \begin{claim}
    Let $S\subseteq V(G)$ of size $|S|> h(p)$. Then we can
    find three vertices $v_1,v_2,v_3 \in S$ such that either
    \(\type{q-1}{G}{v_1}=\type{q-1}{G}{v_2}\) or
    \(\type{q-1}{G}{v_2}=\type{q-1}{G}{v_3}\).
  \end{claim}

  \begin{claimproof}
    By Ramsey's Theorem, there is a monochromatic subset of size $3$,
    that is, three vertices $v_1,v_2,v_3$ such that
    $\gamma_{v_1,v_2}=\gamma_{v_1,v_3}=\gamma_{v_2,v_3}=:\gamma$. We
    shall prove that these vertices satisfy the assertion of the
    claim.

    Suppose $\type{q-1}{G}{v_1}\neq\type{q-1}{G}{v_2}$; if they are
    equal, there is nothing to prove. Then $G\not\models\gamma(v_1)$
    and $G\models\gamma(v_2)$ by Claim~1.
    If in addition we had $\type{q-1}{G}{v_2}\neq\type{q-1}{G}{v_3}$,
    this would imply $G\not\models\gamma(v_2)$
    (and $G\models\gamma(v_3)$), which is a contradiction. Thus,
    $\type{q-1}{G}{v_2}=\type{q-1}{G}{v_3}$.\footnote{Note that we do not need the full strength of Ramsey's
     Theorem here, because we never use $\gamma_{v_1,v_3}=\gamma$. It
     is not hard to prove the weaker statement that we need directly,
     but we found it most convenient to just apply Ramsey's Theorem.}
 \end{claimproof}

 We can use the claim to iteratively construct a set $T$ satisfying
 (i) and (ii). We construct a sequence $T_0\supset T_1\supset\cdots$
 of subsets of $V(G)$ that satisfy (i) and stop as soon as we arrive
 at a set $T:=T_i$ satisfying (ii). Let $T_0\coloneqq V(G)$. If
 $|T_i|>h(p)$, we find $v_1,v_2,v_3$ according to Claim~2 and let
 $T_{i+1}\coloneqq T_i\setminus\{v_2\}$.

 Recall that we assumed the input formula $\phi$ of the model-checking
 problem to be of the shape $\exists x\,\psi(x)$. Since $\psi(x)$ is a
 formula of quantifier rank at most $q-1$, for all $u,v$ with
 $\type{q-1}{G}{u}=\type{q-1}{G}{v}$ it holds that
 $G\models\psi(u)\iff G\models\psi(v)$. Thus by (i), $G\models\phi$ if
 and only if there is a $t\in T$ such that $G\models \psi(t)$. This
 means that to find out if $G\models\phi$, we
 only need to check if $G\models\psi(t)$ for $t\in
 T$. We can do this by recursively calling our algorithm $|T|$ times.

 The only small difficulty is that the input formula to the
 model-checking problem needs to be a sentence, that is, a formula
 without free variables. But we can easily resolve this by
 constructing, for each $t\in T$, a sentence $\psi_t$ and an expansion
 $G_t$ of $G$ such that
 $G\models\psi(t)\iff G_t\models\psi_u$ and then making the recursive
 call to $G_t,\psi_t$. We do this by adding two fresh unary relation
 symbols $P_t,Q_t$, which are interpreted by the sets $P_t(G):=\{t\}$
 and $Q_t(G):=\{u\mid u\in N(t)\}$ in $G_t$. To obtain $\psi_t$
 from $\psi(x)$, we replace all atoms $x=y$, $y=x$ by $P_t(y)$ and all atoms
 $E(x,y)$, $E(y,x)$ by $Q_t(y)$, assuming without loss of generality that there
 are no atoms $E(x,x)$ and $x=x$.

  The recursion tree has depth at most \(q\)
  and the degree is bounded by a function of $p$. Thus, the size of the
  recursion tree is bounded in terms of $p$. Each recursive call
  requires only fpt time, so overall we have an fpt algorithm.

  This completes the proof of the lemma under the assumption that
  $L(1,0,q)=0$ for all $q$.

  The only place where we used the assumption $L(1,0,q)=0$ is in
  the proof of Claim~1. So let us focus on this claim.  Again, let
  \(u,v \in V(G)\) such that \(\type{q-1}{G}{u} \neq
  \type{q-1}{G}{v}\). It is our goal to find a formula $\phi'(x)$
  (without additional free variables) such that $G\not\models\phi'(u)$
  and $G\models\phi'(v)$. The quantifier rank of this formula needs to
  be bounded in terms of $p$, but there is no need for it to be at most
  \(Q(1,0,q-1)\).

  Instead of using types, here we use local types to distinguish $u$
  and $v$. By \cref{fact:local}, we have
  $\ltype{q-1}{r}{G}{u}\neq\ltype{q-1}{r}{G}{v}$ for some $r=r(q-1)\in
  2^{\bigO(q)}$. We can use this to construct an FO-formula
  $\phi^*(x)$ that such that $G\not\models\phi^*(u)$,
  $G\models\phi^*(v)$ and the formula $\phi^*$ is \emph{$r$-local}, which
  means that for all graphs $H$ and all vertices $w\in V(H)$ we have
  $H\models\phi^*(w)$ if and only if
  $\Neighbr{H}{w}\models\phi^*(w)$. To achieve this, we simply need to
  restrict all quantifiers to vertices of distance at most $r$ from
  $x$. This implies that the quantifier rank $q^*$ of the formula
  $\phi^*(x)$ will be in $\bigO(\max\{q,\log r\})=\bigO(q)$.

  Let $\ell:=\max\{1,L(k,0,q^*)\}$, and let $\widehat G$ be the union of
  $2\ell$ disjoint copies $G^{(1)},\ldots,G^{(2\ell)}$ of $G$. For
  every vertex $w\in V(G)$, let $w^{(i)}$ be the copy of $w$ in
  $G^{(i)}$. Let
  $\widehat\Lambda\coloneqq\big((u^{(i)},0),(v^{(i)},1)\bigmid
  i\in[2\ell]\big)$. Observe that for all $i\in[2\ell]$, it holds that
  $\widehat G\not\models\phi^*(u^{(i)})$ and
  $\widehat G\models\phi^*(v^{(i)})$, because
  $\Neighbr{\widehat G}{u^{(i)}}=\Neighbr{G}{u}$ and
  $\Neighbr{\widehat G}{v^{(i)}}=\Neighbr{G}{v}$. Thus
  $\err_{\widehat\Lambda}(\phi^*)=0$.

  We apply \RFOERM{L,Q} on input $\widehat G,\widehat\Lambda$ with parameters
  \(k=1\), \(\ell^*=0\), \(q^*\) and \(\epsilon=1/8\) and obtain a
  formula $\phi(x\smid \bar y)$ with $\ell'=|\bar y|\le\ell$ and a
  tuple $\bar w=(w_1,\ldots,w_{\ell'})\in V(\widehat G)^{\ell'}$ such that
  $\err_{\widehat\Lambda}(\phi,\bar w)\le \frac{1}{8}$. Let us call an
  index $i\in[2\ell]$ \emph{covered} if there is parameter
  $w_i\in V(G^{(i)})$. Moreover, we call $i$ \emph{wrong} if either
  $\widehat G\models\phi(u^{(i)}\smid\bar w)$ or
  $\widehat G\not\models\phi(v^{(i)}\smid\bar w)$. Note that at most
  $\ell'\le\ell$ indices are covered and at most $|\Lambda|/8=\ell/2$
  are wrong. Thus, as $\ell\ge 1$, there is an index $i^\circ$ that is
  neither covered nor wrong. In the following, we let
  $u^\circ\coloneqq u^{(i^\circ)}$ and $v^\circ\coloneqq v^{(i^\circ)}$

  We expand $\widehat G$ by unary relations $P_i$ to a graph $\widehat G'$
  with $P_i(\widehat G')=\{w_i\}$. We can easily construct a formula
  $\phi'(x)$ such that for all $\widehat v\in V(\widehat G)=V(\widehat G')$ we
  have $\widehat G\models\phi(\widehat v\smid\bar w)\iff\widehat
  G'\models\phi'(\widehat v)$. In particular, we have $\widehat
  G'\not\models\phi'(u^{\circ})$ and $\widehat
  G'\models\phi'(v^\circ)$.

  From $\phi'(x)$ we can construct an
  $r'$-local formula $\phi''(x)$, for some $r'$ bounded in terms of the quantifier rank of
  $\phi''$, such that for all $\widehat v\in V(\widehat G')$ we
  have $\widehat
  G'\models\phi'(\widehat v)\iff \widehat
  G'\models\phi''(\widehat v)$.
  Again, $\widehat
  G'\not\models\phi''(u^{\circ})$ and $\widehat
  G'\models\phi''(v^\circ)$.

  Now let $\phi'''(x)$ be the formula (of the original vocabulary
  $\tau$) obtained from $\phi''(x)$ by replacing each atomic
  subformula $P_i(z)$ by \textsc{False} (or $\neg z=z$). Then $\widehat
  G\not\models\phi'''(u^{\circ})$ and $\widehat
  G\models\phi'''(v^\circ)$, simply because there is no $w_i$ in the
  $r'$-neighbourhood of $u^\circ$ or $v^\circ$. Since $\phi'''(x)$ is
  $r'$-local and $\mathcal N_{r'}^{\widehat G}(u^{\circ})=\mathcal N_{r'}^{G}(u)$ and
  $\mathcal N_{r'}^{\widehat G}(v^{\circ})=\mathcal N_{r'}^{G}(v)$, it follows that
  $G\not\models\phi'''(u)$ and  $G\models\phi'''(v)$. Thus
  $\phi'''(x)$ has the desired property.

  Using this generalisation of Claim~1, we can complete the proof as
  in the case ${L(1,0,q)=0}$.
\end{proof}

\begin{remark}
  The proof only uses that the \RFOERM{L,Q}-oracle delivers correct
    answers in the case that $\epsilon^*=0$, that is, there is a
    hypothesis consistent with the training data. This means that the
    hardness result even holds for the realisable case of the learning
    problem.
\end{remark}

 \section{Fixed-parameter Tractable Learning}
\label{sec:fpt}

For a class \(\CC\) of graphs and a parameterized problem \(M\),
with \(M\) being \FOERM, \RFOERM{L,Q}, or \FOmc,
we say that \(M\) is \emph{fixed-parameter tractable on \(\CC\)}
if the restriction of \(M\) to input graphs from \(\CC\)
is fixed-parameter tractable.

Let us say that a class $\CC$ of (coloured) graphs is closed under
\emph{colour expansions} if for all $\tau$-coloured graphs $G$, all
$\tau'\subseteq\tau$, and all $\tau'$-expansions $G'$ of $G$, if
$G\in\CC$ then $G'\in\CC$.  It can easily be checked that the
reduction from \FOmc{} to \RFOERM{Q,L} that we gave in the previous
section can be restricted to all graph classes that are closed under
colour expansions and disjoint unions. As \RFOERM{Q,L} trivially
reduces to \FOERM{}, the result also applies to \FOERM{}.

In the following two lemmas, we give
converse reductions from \FOERM{} to \FOmc{}, but only under additional
restrictions on the parameters $k$ and $\ell$. Again, since
\RFOERM{Q,L} reduces to
\FOERM{}, the results also apply to \RFOERM{Q,L}.

\begin{restatable}{proposition}{constantL}
\label{lem:constant-l}
  Let \,\(\CC\) be a class of graphs closed under colour expansions such that \FOmc{} is fixed-parameter
  tractable on \(\CC\).

  Then, for every constant \(\ell^* \in \NN\),
  the restriction of \FOERM\ to inputs with $\ell=\ell^*$
  is fixed-parameter tractable on \(\CC\).
\end{restatable}

For the proof of \cref{lem:constant-l}, we use that a simple brute-force test
of all \(|V(G)|^{\ell^*}\) possible parameter configurations
can be done within the given time bound.
A formal proof is given in the appendix.

Recall that we call a learning problem \emph{realisable} if there
is a target function, and this target function is contained in the
hypothesis class. In the realisable version of the ERM problem, we
assume that the training error $\epsilon^*$ of the best possible
hypothesis is~$0$. Note that formally, the restriction of \FOERM{} to
the realisable case is a \emph{promise problem} where an
algorithm is only required to give a correct answer if the input
satisfies a certain assumption, the ``promise'', in our case $\epsilon^*=0$. On inputs that do
not fulfil the promise, the algorithm can do anything. The following
proposition considers the realisable version of \FOERM{} restricted to
the 1-dimensional case $k=1$.

\begin{restatable}{proposition}{kIsOne}
\label{lem:k-one}
  Let \,\(\CC\) be a class of graphs closed under colour expansions such that \FOmc{} is fixed-parameter
  tractable on \(\CC\).\\
Then
  the following restriction of \FOERM{}
  is fixed-parameter tractable.

\begin{center}
  \begin{problem}{}
    \item[Input]
    Graph \(G\in\CC\), training sequence $\Lambda\in \big(V(G)^k
  \times \{0,1\}\big)^m$, and $\ell,q\in\mathbb N$, $\epsilon>0$.
    \item[Parameter]
      \(\ell + q + \abs{\tau}+\frac{1}{\epsilon}\)
    \item[Assumption] There is a  $h\in\CH_{1,\ell,q}(G)$ with $\err_{\Lambda}(h)=0$.
    \item[Problem]
    Return a hypothesis $h_{\phi,\bar w}\in\CH_{1,\ell,q}(G)$
    such that
    \[
      \err_{\Lambda}(\phi,\bar{w}) \le \epsilon.
    \]
\end{problem}
\end{center}
\end{restatable}

Here, techniques similar to those used in
\cite{GroheLoedingRitzert_MSO,GrienenbergerRitzert_Trees}
can be applied to find a consistent parameter setting.
Again, a formal proof is given in the appendix.

The following result is a precise version of \cref{theo:tractability}.
\begin{theorem}\label{thm:fpt-learning}
  For every nowhere dense class $\CC$ of graphs,
  there are functions \(L, Q \colon \NN^3 \to \NN\)
  such that the problem \RFOERM{L,Q}
  is fixed-parameter tractable on
  \(\mathcal C\).
\end{theorem}

The remainder of this section is devoted to the proof of the theorem.
In the following, we let \(\CC\) be an effectively nowhere dense graph class,
and we fix \(k \geq 1\) and \(\ell^*, q^* \geq 0\).
We choose \(r = r(q^*)\) according to \cref{fact:local} and let
\(R \coloneqq 3^{\ell^*-1} \cdot \big((k+2)(2r+1)\big)\).
The specific choice of \(R\) will be justified in the proof of
\cref{lem:projection} in the appendix.
Let \(G \in \CC\) and let \(s \geq 1\) such that Splitter has
a winning strategy for the \((R,s)\)-splitter game on \(G\).
The value \(s\) can be computed since \(\CC\) is effectively nowhere dense.
We let \(V \coloneqq V(G)\), \(E \coloneqq E(G)\), \(n \coloneqq \abs{V}\),
\(\ell \coloneqq L(k,\ell^*,q^*) \coloneqq \ell^* \cdot s\) and
\(q \coloneqq Q(k,\ell^*,q^*) \coloneqq q^* + \log R\).
Furthermore, let \(\Lambda\in \big(V(G)^k \times \{0,1\}\big)^m\)
be a training sequence and let
\(\Lambda^+ = \set{\bar v \in V^k \mid (v,1) \in \Lambda}\) and
\(\Lambda^- = \set{\bar v \in V^k \mid (v,0) \in \Lambda}\)
be the sets of positive and negative examples.
Let \(\epsilon^* \geq 0\) such that there is an \FO-formula
\(\phi^*(\bar{x}\smid \bar{y})\) of quantifier rank at most \(q^*\)
with \(k + \ell^*\) free variables and a tuple
\(\bar{w}^* = (w_1^*, \dots, w_{\ell^*}^*) \in V^{\ell^*}\) such that
\(\err_{\Lambda}(\phi^*,\bar{w}^*) \leq \epsilon^*\).

Let \(\epsilon > 0\).
Whenever we speak of an \emph{fpt algorithm} in the following,
we mean an algorithm running in time
\(f(k, \ell^*, q^*, s, 1/\epsilon) \cdot (n+m)^{\bigO(1)}\).
Our goal is to compute a query \((\phi, \bar{w})\), our \emph{hypothesis},
consisting of an \FO-formula \(\phi(\bar{x}\smid \bar{y})\)
of quantifier rank at most \(q\) with \(k + \ell\) free variables
and a tuple \(\bar{w} \in V^\ell\) such that
\(\err_{\Lambda}(\phi,\bar{w}) \leq \epsilon^* + \epsilon\),
using an fpt algorithm.
This means that the hypothesis correctly classifies all but
\(m \cdot (\epsilon^*+\epsilon)\) examples from \(\Lambda\).

A vector \(\bar{v} \in V^{\ell'}\) is called \emph{\(\epsilon'\)-discriminating}
if there is a formula \(\psi\) of quantifier rank at most \(q\) such that
\(\err_{\Lambda}(\psi,\bar{v}) \leq \epsilon'\).
We then say that \(\bar v\) \emph{discriminates} all but (at most)
\(\epsilon' \cdot m\) examples.
If we have an \((\epsilon^* + \epsilon)\)-discriminating
\(\ell\)-tuple \(\bar{w}\),
then we can find the formula \(\phi\) by an fpt algorithm
that simply steps through all possible formulas
since model checking on nowhere dense graphs is in \FPT{}
and there are only finitely many formulas to check.
Hence, in the following, we describe a procedure to find such an
\((\epsilon^* + \epsilon)\)-discriminating \(\ell\)-tuple.

A \emph{conflict} in \(\Lambda\) is a pair
\((\bar{v}^+,\bar{v}^-) \in \Lambda^+ \times \Lambda^-\) with
\(\ltype{q^*}{r}{G}{\bar{v}^+} = \ltype{q^*}{r}{G}{\bar{v}^-}\).
The \emph{type} of a conflict \((\bar{v}^+,\bar{v}^-)\) is the local
\((q^*,r)\)-type \(\ltype{q^*}{r}{G}{\bar{v}^+} = \ltype{q^*}{r}{G}{\bar{v}^-}\).
Let \(\Xi\) be the set of all conflicts.
To \emph{resolve} a conflict \((\bar{v}^+,\bar{v}^-) \in \Xi\),
we need to find parameters \(\bar{w}\) such that
\(\ltype{q^*}{r}{G}{\bar{v}^+ \bar{w}} \neq \ltype{q^*}{r}{G}{\bar{v}^- \bar{w}}\).
Only parameters in the \((2r+1)\)-neighbourhood of
\(\bar{v}^+ \cup \bar{v}^-\) have an effect on the local type.
We say that we \emph{attend} to the conflict \((\bar{v}^+, \bar{v}^-)\)
if we choose at least one parameter in \(\neighb{2r+1}{G}{\bar{v}^+ \bar{v}^-}\).
Note that attending to a conflict is a necessary, but not a sufficient condition
for resolving the conflict.
If we do not attend to a conflict, we \emph{ignore} it.
An example \(\bar{v} \in \Lambda\) is \emph{critical}
if it is involved in some conflict, that is,
\(\bar{v} \in \Lambda^+\) and there is a \(\bar{v}^- \in \Lambda^-\)
such that \((\bar{v},\bar{v}^-) \in \Xi\),
or \(\bar{v} \in \Lambda^-\) and there is a \(\bar{v}^+ \in \Lambda^+\)
such that \((\bar{v}^+,\bar{v}) \in \Xi\).
Let \(\Gamma\) be the set of all critical tuples.
For every \(w \in V\), let
\[
  \Gamma(w) \coloneqq \big\{ \bar{v} \in \Gamma \mid w \in \neighb{2r+1}{G}{\bar{v}} \big\}
\]
Note that if \(\bar{v} \in \Gamma(w)\), then \(w\) attends to all conflicts
\(\bar{v}\) is involved in.

The algorithm \(\Algorithm{L}\) now uses \(s\) steps,
corresponding to moves in the splitter game,
to find  a tuple that discriminates all but \(m (\epsilon^*+\epsilon)\) examples.
For this, we define a sequence of graphs \(G^0,\dots,G^{s}\)
and training sequences \(\Lambda^0, \dots, \Lambda^s\) with
\(\Lambda^i \in \big(V(G^i)^k \times \{0,1\}\big)^{m_i}\).
Let \(G^0 = G\) and \(\Lambda^0 = \Lambda\).
We will define the graphs \(G^i\) for \(i>0\)
and their training sequences \(\Lambda^i\) later.
Every tuple that is not critical can be classified correctly
without additional parameters by adding its local type explicitly to the hypothesis.
Thus, in every step, we consider only examples that have been involved in a conflict
in all previous steps.

We now consider the \(i\)-th step of the algorithm.
Let \(G^i\) be the current graph and
\(\Gamma^i\) be the set of critical examples in the \(i\)-th step.
In the following two lemmas,
we strategically limit the search space for the parameters
such that in each step, the error increases only by \(\frac{\epsilon}{s}\).
The next lemma shows that there is a small set~\(X\) of neighbourhood centres
such that most of the conflicts can be attended by nodes from \(\neighb{4r+2}{G^i}{X}\).
\begin{lemma}\label{lem:setX}
  There is a set \(X \subseteq V(G^i)\)
  of size \(\abs{X} \leq \frac{k \ell^* s}{\epsilon}\)
  such that
  \[
    \frac{\abs{\Gamma^i(u)}}{\abs{\Gamma}} < \frac{\epsilon}{\ell^*\cdot s}
    \qquad\text{ for all }u \in V(G^i) \setminus \neighb{4r+2}{G^i}{X},
  \]
  where \(\Gamma^i(u)\) is the set of critical tuples in \(G^i\)
  that are affected by \(u\)
  and \(\Gamma\) is the set of critical tuples in the original graph \(G\).
  Furthermore, there is an fpt algorithm computing such a set.
\end{lemma}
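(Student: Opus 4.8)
The plan is to build the set \(X\) greedily, adding ``neighbourhood centres'' as long as some heavy vertex is still uncovered, and then to bound \(\abs{X}\) by a packing argument exploiting that the chosen centres are pairwise far apart in \(G^i\). Concretely, I would call a vertex \(u \in V(G^i)\) \emph{heavy} if \(\abs{\Gamma^i(u)} \ge \frac{\epsilon}{\ell s}\,\abs{\Gamma}\), start from \(X \coloneqq \emptyset\), and while there is a heavy vertex \(u \notin \neighb{4r+2}{G^i}{X}\), add such a \(u\) to \(X\). Since \(X\) strictly grows inside the finite set \(V(G^i)\), the loop terminates; on termination every heavy vertex lies in \(\neighb{4r+2}{G^i}{X}\), i.e.\ every \(u \in V(G^i) \setminus \neighb{4r+2}{G^i}{X}\) satisfies \(\abs{\Gamma^i(u)} < \frac{\epsilon}{\ell s}\,\abs{\Gamma}\), which is the required property. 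So the only real work is bounding \(\abs{X}\).

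To do that, I would first note that any two distinct \(u, u' \in X\) are at distance more than \(4r+2\) in \(G^i\): the one added to \(X\) second was, at that moment, not in \(\neighb{4r+2}{G^i}{X}\), which already contained the other. Now fix a critical tuple \(\bar v = (v_1, \dots, v_k) \in \Gamma^i\) and set \(S_{\bar v} \coloneqq \set{u \in X \mid \bar v \in \Gamma^i(u)}\). For each \(u \in S_{\bar v}\), choose a component of \(\bar v\) within distance \(2r+1\) of \(u\) (one exists since \(\bar v \in \Gamma^i(u)\) means \(u \in \neighb{2r+1}{G^i}{\bar v}\)); no component can be chosen for two different \(u\)'s, since they would then be at distance at most \((2r+1)+(2r+1)=4r+2\) in \(G^i\), contradicting the previous observation. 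Hence \(\abs{S_{\bar v}} \le k\), and double-counting the incidences between \(X\) and \(\Gamma^i\) gives
\[
  \abs{X}\cdot\frac{\epsilon}{\ell s}\,\abs{\Gamma}
  \;\le\; \sum_{u \in X} \abs{\Gamma^i(u)}
  \;=\; \sum_{\bar v \in \Gamma^i} \abs{S_{\bar v}}
  \;\le\; k\,\abs{\Gamma^i}
  \;\le\; k\,\abs{\Gamma},
\]
using at the end that the set of critical examples only shrinks from step to step, so \(\abs{\Gamma^i} \le \abs{\Gamma}\). Rearranging yields \(\abs{X} \le \frac{k\ell s}{\epsilon}\).

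Finally, the construction is an fpt algorithm. The conflict set and hence \(\Gamma^i\) (and \(\Gamma\)) are fpt-computable: a conflict is a pair of examples with equal local \((q,r)\)-type, and \(\ltype{q}{r}{G}{\bar u}\) is just the \(q\)-type of \(\bar u\) in the induced neighbourhood structure \(\Neighbr{G}{\bar u}\), which still belongs to the nowhere dense class (even after adding finitely many unary colours naming the components of \(\bar u\)); since \FOmc{} is fixed-parameter tractable there and there are only finitely many quantifier-rank-\(q\) normal-form formulas to test, the type is computable by fpt model checking, and there are only \(\bigO(m^2)\) pairs of examples to inspect. Given \(\Gamma^i\), each \(\Gamma^i(u)\) is found by a breadth-first search of radius \(2r+1\) from the components of the critical tuples, and the greedy loop runs at most \(\frac{k\ell s}{\epsilon}\) times, scanning \(V(G^i)\) each time. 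I expect the one point that needs care to be this effective computation of local types: in contrast to the bounded-degree setting, radius-\(r\) neighbourhoods on nowhere dense graphs can be huge, so one cannot inspect them directly and must instead invoke the \FPT{}-ness of \FOmc{} on the (still nowhere dense) neighbourhood structures; the combinatorial heart — the packing bound on \(\abs{X}\) — is otherwise elementary.
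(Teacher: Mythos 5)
Your proposal is correct and follows essentially the same route as the paper: a greedy selection of pairwise $(4r+2)$-separated centres, the observation that each critical $k$-tuple can be affected by at most $k$ such centres (one per component), and a double-counting/packing bound giving $\abs{X}\le k\ell s/\epsilon$. The only cosmetic difference is that the paper picks centres in order of decreasing $\abs{\Gamma^i(x_j)}$ and truncates the sequence, whereas you pick arbitrary uncovered heavy vertices and argue termination directly; both yield the same bound, and your remarks on the fpt computation of local types are consistent with what the paper relies on.
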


\begin{proof}
  We inductively define a sequence \(x_1, \dots, x_p \in V(G^i)\) as follows.
  For all \(j \geq 1\), we choose \(x_j \in V(G^i)\)
  such that \(\dist(x_j, x_{j'}) > 4r+2\) for all \(j' < j\) and,
  subject to this condition, \(\abs{\Gamma^i(x_j)}\) is maximum.
  If no such \(x_j\) exists, we let \(p = j-1\) and stop the construction.
We note that for every critical tuple \(\bar{v} \in \Gamma\),
  there are at most \(k\) of the \(x_j\) such that \(\bar{v} \in \Gamma^i(x_j)\).
  This holds as for every entry \(v\) of \(\bar{v} \in \big(V(G^i)\big)^k\),
  there is at most one of the \(x_j\) in the
  \((2r+1)\)-neighbourhood of \(v\) by the construction of \(X\).
  Therefore,
  \[
    \sum_{j=1}^p \abs{\Gamma^i(x_j)} \leq k \abs{\Gamma}.
  \]
  This means that there are at most \(k\ell^* s/\epsilon\) of the \(x_j\)
  with \(\abs{\Gamma^i(x_j)} \geq \frac{\epsilon}{\ell^* s}\abs{\Gamma}\).
  As the \(x_j\) are sorted by decreasing \(\abs{\Gamma^i(x_j)}\),
  we have \(\abs{\Gamma^i(x_j)} < \frac{\epsilon}{\ell^* s}\abs{\Gamma}\)
  for all \(j > \frac{k \ell^* s}{\epsilon}\).
  We let \mbox{\(X=\{x_1, \dots, x_{\min\{p, k \ell^* s/\epsilon\}}\}\)}.
\end{proof}

In the following, we fix a set \(X\) according to the lemma.
We show in the next lemma that there is always a tuple of parameters
consisting only of vertices in the neighbourhood of~\(X\)
that can induce a small error.
\begin{restatable}{lemma}{existenceOfDiscriminating}
\label{lem:existenceOfDiscriminating}
  If there is a tuple \(\bar v \in \big(V(G^i)\big)^{\ell^*}\)
  that discriminates all but \(m\cdot  \epsilon'\) examples in~\(G^i\),
  then there is a tuple \(\bar w \in \big(\neighb{4r+2}{G^i}{X}\big)^{\ell^*}\)
  that discriminates all but \(m(\epsilon'+\epsilon/s)\) examples in \(G^i\).
\end{restatable}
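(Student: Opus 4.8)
The plan is to start from the given discriminating tuple $\bar v = (v_1,\dots,v_\ell)$ and modify it coordinate by coordinate, or rather block by block, so that every parameter we keep lies in $\neighb{4r+2}{G^i}{X}$, while we throw away only a controlled number of examples in the process. The guiding principle is the one already set up in the surrounding text: a parameter $w$ can only affect the local $(q,r)$-type of a critical tuple $\bar v'$ if $w \in \neighb{2r+1}{G^i}{\bar v'}$, i.e.\ if $\bar v' \in \Gamma^i(w)$; and by Lemma~\ref{lem:setX}, any vertex $u$ outside $\neighb{4r+2}{G^i}{X}$ has $\abs{\Gamma^i(u)} < \tfrac{\epsilon}{\ell s}\abs{\Gamma}$, so it attends to fewer than $\tfrac{\epsilon}{\ell s}\abs{\Gamma} \le \tfrac{\epsilon}{\ell s} m$ critical tuples. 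The first step is therefore to identify which coordinates $v_j$ of $\bar v$ are already in $\neighb{4r+2}{G^i}{X}$ (keep those) and which are not.

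For each coordinate $v_j \notin \neighb{4r+2}{G^i}{X}$, the plan is to replace $v_j$ by \emph{some} vertex of $\neighb{4r+2}{G^i}{X}$ — any fixed such vertex, or even, more cheaply, to reason as if we deleted $v_j$ and then padded the tuple back to length $\ell$ with an arbitrary vertex of $\neighb{4r+2}{G^i}{X}$. The point is that $v_j$ "resolves" a conflict only among the tuples in $\Gamma^i(v_j)$, and there are fewer than $\tfrac{\epsilon}{\ell s}m$ of those. Hence swapping out $v_j$ can change the classification of at most $\tfrac{\epsilon}{\ell s}m$ examples. The small subtlety here is that changing $v_j$ can also \emph{create} a discrepancy on a previously-correct example: but again, the only examples whose local type is sensitive to the value in coordinate $j$ are those in $\Gamma^i(v_j) \cup \Gamma^i(v_j')$ where $v_j'$ is the replacement; since the replacement lies in $\neighb{4r+2}{G^i}{X}$ this does not by itself give a bound, so the cleaner route is: the examples affected by the swap are contained in $\Gamma^i(v_j)$ together with the examples attended to by the new parameter — and if we choose the new parameter inside $X$'s neighbourhood but \emph{also} require it to be one that we are anyway going to use (or just accept that those examples may be handled by the formula), we need to be slightly careful. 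I would phrase it as: delete each bad coordinate, losing $\le \tfrac{\epsilon}{\ell s}m$ examples per deletion, obtaining a shorter good tuple $\bar w_0$; then observe that a formula $\psi$ witnessing discrimination for $\bar v$ can be adapted, using Lemma~\ref{lem:merge-types}, to a formula of quantifier rank still at most $Q$ that works for $\bar w_0$ on all examples except those in the union of the $\Gamma^i(v_j)$ over deleted coordinates; finally pad $\bar w_0$ back to length $\ell$ with arbitrary vertices of $\neighb{4r+2}{G^i}{X}$, which cannot hurt since padding only adds parameters and the old formula can ignore them.

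Summing over the at most $\ell$ deleted coordinates, the total number of newly-misclassified examples is at most $\ell \cdot \tfrac{\epsilon}{\ell s} m = \tfrac{\epsilon}{s} m$, so the resulting tuple $\bar w \in \big(\neighb{4r+2}{G^i}{X}\big)^\ell$ discriminates all but $m(\epsilon' + \epsilon/s)$ examples, as claimed. I expect the main obstacle to be the bookkeeping in the middle step: making precise that removing a parameter outside $\neighb{4r+2}{G^i}{X}$ only affects examples in $\Gamma^i(v_j)$, and that the witnessing formula can be rebuilt at quantifier rank $\le Q$ without blow-up — this is where Lemma~\ref{lem:merge-types} and Corollary~\ref{cor:local} do the real work, letting us argue at the level of local $(q,r)$-types rather than re-deriving a formula from scratch. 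Everything else is counting.
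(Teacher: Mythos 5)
Your proposal is correct and follows essentially the same route as the paper: split the coordinates of $\bar v$ into those inside and outside $\neighb{4r+2}{G^i}{X}$, observe via Lemma~\ref{lem:setX} that each outside coordinate can only affect the examples in $\Gamma^i(v_j)$, of which there are fewer than $\tfrac{\epsilon}{\ell s}m$, drop those coordinates at a total cost of $\tfrac{\epsilon}{s}m$, and pad back to an $\ell$-tuple inside the neighbourhood of $X$. Your extra care about the padding not creating new errors (the formula simply ignores the padded positions; the paper instead duplicates an element of $W^*$) is a fair point but does not change the argument.
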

In the proof, \(\bar w\) is computed from \(\bar v\)
by dropping all elements that are far from \(X\).
By the choice of \(X\), this introduces only a small error.

\begin{proof}
  We split the entries \(v_j\) of \(\bar v\) in two sets \(W^*, U^*\),
  where \(W^*\) contains all \(v_j\) that are contained in
  \(\neighb{4r+2}{G^i}{X}\)
  and \(U^*\) contains the remaining entries of \(\bar v\).
  By the choice of \(X\), each \(u \in U^*\) can only discriminate up to
  \[
  \abs{\Gamma^i(u)}
    \leq \frac{m\abs{\Gamma^i(u)}}{\abs{\Gamma}}
    \leq \frac{m\epsilon}{\ell^* \cdot s}
  \]
  examples, where \(\Gamma^i(u)\) is the set of critical tuples in \(G^i\)
  that are affected by \(u\) and \(\Gamma\)
  is the set of critical examples in the original graph \(G\).
  Thus, the elements of $U^*$ can discriminate at most
  \(m \frac{\epsilon}{s}\) examples in \(G^i\).
  Since \(\bar{v}\) discriminates all but \(m\cdot \epsilon'\) examples,
  the elements of \(W^*\) must discriminate all but
  \(m \cdot (\epsilon' + \frac{\epsilon}{s})\) examples.
  We let \(\bar{w}\) be an \(\ell^*\)-tuple formed by the elements of \(W^*\),
  for example choosing \(w_j = v_j\) for all \(v_j \in W^*\),
  and \(w_j\) is set to a fixed element in \(W^*\) for all other entries.
  Then, \(\bar w \in \big(\neighb{4r+2}{G^i}{X}\big)^{\ell^*}\)
  and \(\bar{w}\) discriminates all but \(m(\epsilon'+\epsilon/s)\)
  examples in \(G^i\).
\end{proof}

We can now describe Step \(i\) and the way it is embedded in the overall algorithm.
We know that in \(G^0\), the vector \(\bar w^*\) discriminates all but
\(m\epsilon^*\) examples by the definition of \(\bar w^*\) and \(\epsilon^*\).
In all later graphs \(G^i\),
there is a vector that discriminates all but
\(m\cdot(\epsilon^*+\frac{i\epsilon}{s})\) of the examples \(\Lambda^i\)
which will be guaranteed by the construction of \(G^i\) and \(\Lambda^i\)
in \cref{lem:projection}.
By \cref{lem:setX},
there is also a tuple close to \(X\) that discriminates all but
\(m\cdot\big(\epsilon^*+\frac{(i+1)\epsilon}{s}\big)\) examples.
Thus, by further restricting the set of possible parameters,
we increase the error in each step.
In Step \(i\) we will choose \(\ell^*\) parameters.
Together with the parameters we choose in all following steps,
this will result in the wanted tuple that discriminates all but (at most)
\(m(\epsilon^* + \epsilon)\) examples.
Our next goal is to find such a tuple
\(\bar{w} \in \big(\neighb{4r+2}{G^i}{X}\big)^{\ell^*}\) that
(together with the parameters from the following steps) discriminates all but
\(m(\epsilon^* + \epsilon)\) examples in \(G^i\).

Clearly, for each such tuple of arity \(\ell^*\),
there is a subset \(Y \subseteq X\) of size \(\abs{Y} \leq \ell^*\)
such that \(\bar{w} \in \big(\neighb{4r+2}{G^i}{Y}\big)^{\ell^*}\).
We non-deterministically guess such a set \(Y = \{y_1, \dots, y_{\ell'}\}\)
and keep it fixed in the following.
Simulating this non-deterministic guess by a deterministic algorithm
adds a multiplicative cost of
\(\abs{X}^{\ell^*} \leq \big(\frac{k s \ell^* }{\epsilon}\big)^{\ell^*}\)\!,
which is allowed in an fpt algorithm.

When searching for parameters in \(\neighb{4r+2}{G^i}{Y}\),
we can attend only to conflicts with at least one element in \(\neighb{6r+3}{G^i}{Y}\) as all other conflicts cannot be attended by parameters from \(\neighb{4r+2}{G^i}{Y}\).
We apply \cref{lem:vitali} and obtain a set \(Z\subseteq Y\) and an
\(R' = 3^j \big((k+2)(2r+1)\big)\),
where \(0 \leq j \leq \abs{Y}-1 \leq \ell^*-1\),
such that
\[
\neighb{R'}{G^i}{z} \cap \neighb{R'}{G^i}{z'} = \emptyset
\]
for all distinct \(z,z' \in Z\) and
\[
\neighb{(k+2)(2r+1)}{G^i}{Y} \subseteq \neighb{R'}{G^i}{Z}.
\]
Note that \(R' \leq R\),
where \(R\) is the radius from the \((R,s)\)-splitter game
defined in the very beginning of the proof.
We will see in the proof of \cref{lem:projection} why we need
\(\big((k+2)(2r+1)\big)\)-neighbourhoods.
Suppose that \(Z=\{z_1, \dots, z_{\ell''}\}\),
where \(\ell'' \leq \ell' \leq \ell^*\).
For every \(j \in [\ell'']\),
let~\(w_j\) be Splitter's answer if Connector picks \(z_j\) together with
the radius \(R'\) in the (modified) \((R,s)\)-splitter game on \(G\).
Note that we consider only possible picks \(z_j\in Z\) and not arbitrary choices.
The reason why Splitter's answers to the \(z_j\) suffice is that
if we can identify every node in \(\neighb{R'}{G}{Z}\), then we can solve
(almost) all conflicts consisting of vectors of vertices from that set.
Essentially, Splitter guarantees that after removing a certain set of points
(her answers in the splitter game), every node in the neighbourhood
can be identified in at most \(s-(i+1)\) steps if she had a winning strategy
for the current graph in \(s-i\) steps (which is guaranteed by
the construction of \(G^i\)).

We then choose the vertices \(\hat w^i = (w_1, \dots, w_{\ell^*})\)
as parameters in step \(i\),
where \(w_j \coloneqq w_{\ell''}\) for all \(j \in \{\ell''+1, \dots, \ell^*\}\).
With those parameters, we can now define the next graph \(G^{i+1}\)
and the next set of examples \(\Lambda^{i+1}\).

\begin{restatable}{lemma}{projection}
\label{lem:projection}
  There is a graph \(G^{i+1}\) and a training sequence
  \(\Lambda^{i+1} \in \big(V(G^{i+1})^k \times \{0,1\}\big)^{m_{i+1}}\)
  for some \(m_{i+1} \leq m_i\) such that the following holds.
  \begin{bracketenumerate}
    \item \(V(G^{i+1}) = \neighb{R'}{G^i}{Z} \uplus U \uplus U^*\),
      where \(U\) is a set of fresh isolated vertices in \(G^{i+1}\) and
      \(U^*\) is the set of all isolated vertices in \(G^i\).
      \(\abs{U}\) depends only on \(k,\ell^*,q^*\)
      but not on \(m\) or \(n\). \label{lem:proj:nodes}
    \item \(E(G^{i+1}) \subseteq E(G^i)\).\label{lem:proj:edges}
    \item Splitter has a winning strategy for the
      \(\big((R,s-(i+1)\big)\)-splitter game on \(G^{i+1}\).
      \label{lem:proj:winning}
    \item If there is a tuple \(\bar w^i\) that discriminates
      all but \(m\cdot \epsilon'\) examples in \(G^i\), then there is a tuple
      \(\bar w^{i+1}\) that discriminates all but \(m(\epsilon'+\frac{\epsilon}{s})\) examples from \(\Lambda^{i+1}\)
      in \(G^{i+1}\).
      \label{lem:proj:down}
    \item If there is a tuple \(\bar{w}^{i+1}\) from \(V(G^{i+1})\)
      that distinguishes all but \(c\) examples from
      \(\Lambda^{i+1}\), then the tuple \(\hat{w}^i \bar{u}^{i+1}\)
      distinguishes all but \(c\) examples from \(\Lambda^{i}\) where
      \(\bar {u}^{i+1}\) is derived from \(\bar w^{i+1}\) by dropping all entries not in \(V(G^i)\).\label{lem:proj:up}
  \end{bracketenumerate}
  Furthermore, there is an fpt algorithm computing \(G^{i+1}\) and
     \(\Lambda^{i+1}\).
\end{restatable}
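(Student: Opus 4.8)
The plan is to build $G^{i+1}$ by contracting $G^i$ down to the neighbourhood $\neighb{R'}{G^i}{Z}$ (plus harmless isolated vertices), while recording enough information about the rest of $G^i$ so that local types of examples are preserved, and then to project the example tuples accordingly. Concretely: first I would let $G^{i+1}$ be induced on $\neighb{R'}{G^i}{Z}$, add the old isolated vertices $U^*$ unchanged, and add a set $U$ of fresh isolated vertices that will serve as ``generic'' stand-ins for tuple entries that have been cut away; items \ref{lem:proj:nodes} and \ref{lem:proj:edges} are then immediate from the construction, with $|U|$ bounded by the (finitely many, depending only on $k,\ell,q,|\tau|$) local $(q,r)$-types realisable on $k$-tuples. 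For \ref{lem:proj:winning}, the key point is that $w_1,\dots,w_{\ell''}$ were chosen as Splitter's answers when Connector plays $z_1,\dots,z_{\ell''}$ in the $(R,s)$-splitter game on $G^i$ (noting $G^i$ still admits a winning strategy in $s-i$ rounds); since the $\neighb{R'}{G^i}{z_j}$ are pairwise disjoint and each $G^i[\neighb{R'}{G^i}{z_j}\setminus\{w_j\}]$ is winnable in $s-i-1$ rounds, and $R'\le R$, the disjoint union of these pieces — which is exactly the non-isolated part of $G^{i+1}$ — is winnable in $s-(i+1)$ rounds (isolated vertices do not affect the splitter game once removed in one round, or can be absorbed).

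For the two ``transfer'' items \ref{lem:proj:down} and \ref{lem:proj:up}, the engine is \cref{cor:local} together with \cref{lem:merge-types}: a formula of quantifier rank $\le Q$ classifies a $k$-tuple $\bar v$ according to $\ltype{q}{r}{G}{\bar v\bar w}$ (using $r=r(q)$ and absorbing the $\log R$ slack into the distance-measuring machinery that the choice $Q=q+\log R$ accounts for). I would define the projection $\bar v\mapsto\bar v'$ for an example $\bar v\in\Gamma^i$: each entry of $\bar v$ lying in $\neighb{R'}{G^i}{Z}$ is kept; each entry outside is replaced by a fresh isolated vertex from $U$ chosen so that the local type is matched — here the $\big((k+2)(2r+1)\big)$-neighbourhood radius in the Vitali application is exactly what guarantees that if any entry of $\bar v$ is within $2r+1$ of some $w\in\neighb{4r+2}{G^i}{Y}$ (i.e.\ relevant to attending conflicts) then the \emph{whole} $r$-ball around $\bar v$, and the $r$-balls around the parameters we will add, sit inside $\neighb{R'}{G^i}{Z}$ and are therefore preserved intact. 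For \ref{lem:proj:down}, given $\bar w^i$ discriminating all but $m\epsilon'$ examples in $G^i$, \cref{lem:existenceOfDiscriminating} first moves it into $\big(\neighb{4r+2}{G^i}{X}\big)^\ell$ at a cost of $m\epsilon/s$, and then the preservation of local types under projection turns it into a $\bar w^{i+1}$ over $G^{i+1}$ discriminating all but $m(\epsilon'+\epsilon/s)$ of the projected examples $\Lambda^{i+1}$. For \ref{lem:proj:up}, conversely, a tuple $\bar w^{i+1}$ over $G^{i+1}$ is pulled back by keeping its $\neighb{R'}{G^i}{Z}$-entries and dropping the $U$-entries (obtaining $\bar u^{i+1}$), and prepending the step-$i$ parameters $\hat w^i$; \cref{lem:merge-types} shows that prepending $\hat w^i$ does not merge any previously-separated local types, so $\hat w^i\bar u^{i+1}$ distinguishes at least as many examples in $G^i$ as $\bar w^{i+1}$ did in $G^{i+1}$.

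The main obstacle I expect is the careful bookkeeping in the local-type preservation across the projection: one must show that (a) cutting $G^i$ down to $\neighb{R'}{G^i}{Z}$ does not change $\ltype{q}{r}{G^i}{\bar v\bar w}$ for the tuples and parameters that matter, which requires that all the relevant $r$-balls (around examples and around candidate parameters in $\neighb{4r+2}{G^i}{Y}$) lie strictly inside $\neighb{R'}{G^i}{Z}$ — this is where the $(k+2)(2r+1)$ budget is spent, with the $(k+2)$ factor covering the $k$ tuple entries plus the parameter plus a buffer of $2r+1$ for the type radius and the edge relation's reach; and (b) that the fresh isolated vertices in $U$ genuinely realise the needed local types of the ``cut'' entries and that merging via \cref{lem:merge-types} is legitimate because the cut part and the kept part interact only through already-accounted-for distances. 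Finally, the fpt claim is routine: $Z$, $R'$, and Splitter's answers are all computable (effective nowhere density), and the projection and the bound $|U|\le f(k,\ell,q,|\tau|)$ make $G^{i+1}$ and $\Lambda^{i+1}$ computable in time $f(k,\ell,q,s,|\tau|,1/\epsilon)\cdot(n+m)^{\bigO(1)}$.
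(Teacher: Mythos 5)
Your overall architecture matches the paper's: restrict $G^i$ to $\neighb{R'}{G^i}{Z}$, add isolated ``type-representative'' vertices, get the splitter-game progress from the fact that the $w_j$ are Splitter's answers to the $z_j$, and transfer discriminating tuples back and forth via preservation of local types. However, there is a genuine gap in how you project the examples. You propose to keep each entry of $\bar v$ that lies in $\neighb{R'}{G^i}{Z}$ and to replace each entry outside it by a fresh isolated vertex ``chosen so that the local type is matched''. This entry-wise replacement loses exactly the information you need: if two entries $v_a,v_b$ are within distance $2r+1$ of each other, their \emph{joint} local $(q,r)$-type is not determined by their individual types, and if one of them is kept while the other is replaced by an isolated vertex, the pair's type in $G^{i+1}$ no longer reflects its type in $G^i$ — so conflicts can be silently created or destroyed. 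The paper avoids this by first forming the auxiliary graph $H_{\bar v}$ on $[k]$ whose edges join indices at distance between $1$ and $2r+1$, and then treating each connected component $I_j$ \emph{uniformly}: either some entry of the component lies in $\neighb{6r+3}{G^i}{Y}$, in which case the radius budget $(k+2)(2r+1)\le R'$ guarantees the \emph{entire} component (not just that entry) sits inside $\neighb{R'}{G^i}{Z}$ and is kept verbatim, or no entry does, in which case the whole component is collapsed to a single vertex $t_{I_j,\theta}$ recording the joint type $\theta$ of the sub-tuple. Your criterion (membership in $\neighb{R'}{G^i}{Z}$) does not prevent a component from straddling the cut, so as stated the plan fails at statements \eqref{lem:proj:down} and \eqref{lem:proj:up}.

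A second, smaller issue: your stated construction of $G^{i+1}$ takes the induced subgraph on $\neighb{R'}{G^i}{Z}$ without removing the $w_j$ or their incident edges, yet your splitter-game argument asserts that the non-isolated part is the disjoint union of the graphs $G^i[\neighb{R'}{G^i}{z_j}\setminus\{w_j\}]$. These are not the same graph; for statement \eqref{lem:proj:winning} you must actually delete the edges at each $w_j$ (the paper isolates the $w_j$ and marks them with colours $B_j$, and records their former adjacency in colours $C_j$, plus distance annotations $D_{j,d}$, precisely so that the deleted structure remains first-order recoverable — this is what makes the mutual interpretation underlying \eqref{lem:proj:down} and \eqref{lem:proj:up} work at quantifier rank $Q=q+\log R$). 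You gesture at the distance machinery but the explicit colour bookkeeping is load-bearing, not optional.
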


The formal proof of the lemma is given in the appendix.
From the lemma itself and its proof, we observe the following.

\begin{remark}
   For the graph \(G^{i+1}\) the following holds:
\begin{enumerate}
  \item The isolated vertices in \(G^{i+1}\) will not be suitable parameters,
    as any conflict they resolve could also be resolved without the use of additional parameters.
    Thus, we can assume that all nodes from \(\bar w^i\)
    also appear in \(G^i\) and we have that \(\bar u^i = \bar w^i\)
    in Statement \eqref{lem:proj:up}.
  \item Every conflict in the examples \(\Lambda^{i+1}\) in the graph \(G^{i+1}\)
    corresponds to a conflict between examples from \(\Lambda^i\) in the graph \(G^i\).
    The number of conflicts can thus only decrease.
  \item \(G^{i+1}\) is nowhere dense because Splitter has a winning strategy.
  \item \(\Lambda^{i+1}\) contains exactly the critical examples from \(\Lambda^i\).
  \item Examples in \(\Lambda^{i+1}\) are projections of examples in
    \(\Lambda^i\) into \(V(G^{i+1})\),
    essentially changing all those nodes that are not included in
    \(\neighb{R'}{G^i}{Z}\) to isolated vertices that represent types.
    The type each isolated vertex represents is encoded by its colour.
\end{enumerate}
\end{remark}

Note that only the last two items in the remark do not directly follow from
the lemma (and are details of its proof).
With the construction of the graph \(G^{i+1}\) and the corresponding set of
examples \(\Lambda^{i+1}\), we now have all the ingredients to prove
\cref{thm:fpt-learning}.

\begin{proof}[Proof of \cref{thm:fpt-learning}]
  We start the algorithm by computing the number of steps \(s\)
  in the splitter game.
  We set \(G^0=G\) and \(\Lambda^0= \Lambda\), as described above.
  For \(i=0\) to \(s\), we perform the following steps.

  We use \cref{lem:setX} to compute a set \(X\).
  By \cref{lem:existenceOfDiscriminating}, we know that there exists
  a tuple in the neighbourhood of \(X\) that discriminates all but
  \(m\big(\epsilon^*+\frac{(i+1)\epsilon}{s}\big)\) examples.
  This step is crucial as only in the neighbourhood of \(X\), parameters will
  have a high impact and we thus limit our search for parameters to this area.
  Over all \(s\) steps, this additional error sums up to \(\epsilon\).
Now, we non-deterministically guess a subset
  \(Y\subseteq X\) of size at most \(\ell^*\).
  Unrolling this non-deterministic guess adds a factor that depends only on
  the parameters of the problem and can thus be performed by an fpt algorithm.
  Next, we apply \cref{lem:vitali} and obtain \(Z\).
  By saving Splitter's answers in the splitter game to the picks \(z_j\in Z\)
  in the parameters \(w_j\), we obtain the vector \(\hat w^i\).
  With this vector \(\hat w^i\), we can now apply \cref{lem:projection} and compute
  the next graph \(G^{i+1}\) and the next training sequence \(\Lambda^{i+1}\)
  to proceed to Step \(i+1\).

  In Step \(s\), we know that the tuple \(\hat w^{s}\) discriminates all but
  \(m(\epsilon^*+\epsilon)\) examples from  \(\Lambda^s\) in \(G^s\).
  By concatenating all \(\hat w^{i}\), we obtain a tuple \(\bar w\) of length
  \(\ell = \ell^* \cdot s\) that discriminates all but \(m(\epsilon^*+\epsilon)\)
  examples in the original graph \(G\).
  We finish the computation by testing all possible formulas of quantifier rank
  \(q\) with \(k+\ell\) free variables and are guaranteed to find at least one
  \(\phi\) with \(\err_{\Lambda}(\phi,\bar w) \leq \epsilon^*+\epsilon\).

  Overall, the problem \RFOERM{L,Q} is in \FPT{}
  since there is an fpt algorithm for model checking first-order formulas
  on nowhere dense graphs and the bound on the number of such formulas we need to check
  depends only on \(k,\ell,\text{ and }q\).
  All intermediate steps can also be performed by fpt algorithms.
\end{proof}

 \section{Conclusion}

We have studied the parameterized complexity of learning first-order
queries from examples. The specific problem we focussed on is
\emph{empirical risk minimisation}: find a query consistent with a
set of positive or negative examples, or if no such query exists,
minimise the error. While the main reason for looking at this problem
is that it is essentially equivalent to PAC learning, we believe that
the problem is natural and interesting in its own right, en par with
other algorithmic problems typically studied for queries: evaluation
(or model-checking), enumeration, or counting. The relevance of the
problem goes beyond mere ``learning queries by example''. In database
systems, a similar problem arises for example in schema mapping \cite{catdalkol13};
the problem of finding formal specifications consistent with examples
arises in other fields as well, for example in formal verification (e.g.~\cite{lodmadnei16,garneimadrot16}).

Technically, the ERM problem poses new
challenges. In this work, they became most obvious in the hardness proof,
which required novel ideas quite distinct to those used in similar
results. It is remarkable that the hardness result even holds
for an approximate version of the problem, because hardness of
approximation results are rare in the world of parameterized
complexity theory.

Nowhere dense classes seem like a natural limit for the tractability
of the learning problem, at least for graph classes that are closed
under taking subgraphs. Adler and Adler \cite{adladl14} proved that
nowhere dense classes are the largest family of classes closed under
taking subgraphs for which first-order queries have bounded VC
dimension and hence for which we have a uniform information theoretic
PAC learning result. Furthermore, Grohe, Kreutzert and Siebertz
\cite{grokresie17} showed that nowhere dense classes are the largest
family of classes closed under taking subgraphs for which first-order
model-checking is fixed-parameter tractable. And indeed, our proof
shows that the learning problem can be tightly linked with the game
characterisation of nowhere-denseness. However, the result holds only
for the approximation version \RFOERM{L,Q} of the problem where we are
allowed to increase the hyperparameters $\ell$ and $q$. It remains a
challenging open problem whether \FOERM\ is also fixed-parameter
tractable on nowhere dense classes.

Our hardness result yields a reduction from the model-checking problem
\FOmc\ to \FOERM\ and \RFOERM{L,Q}. This reduction relativises to
graph classes satisfying mild closure conditions.
It remains open whether the problems are actually equivalent, that is,
whether there is a reduction
from \FOERM\ or \RFOERM{L,Q}\ to \FOmc\ on all classes satisfying
reasonable closure conditions.

While we know that first-order definable concepts admit no
sublinear-time learning algorithms on nowhere dense graph classes, it might
be possible to obtain such algorithms
after a polynomial-time preprocessing phase (similar to the results of
\cite{GroheLoedingRitzert_MSO,GrienenbergerRitzert_Trees} for monadic second-order logic on strings and trees).

Finally, it would be interesting to extend our results to richer logics
that implement more features of commonly used relational database systems
such as the extensions of first-order logic with counting or weight
aggregation. Similarly, it would be interesting to prove that concepts
definable in monadic second-order logic over structures of bounded
tree width or bounded clique width can be learned by fixed-parameter
tractable algorithms.

\bibliography{references}

\appendix
\newpage
\section{Appendix}

\subsection{Proof of \cref{lem:constant-l}}

\constantL*{}

\begin{algorithm}
  \caption{\FOERM{} learning algorithm for constant \(\ell\)}
  \label{algorithm:constant-l}
  \begin{algorithmic}[1]
    \REQUIRE{Training sequence \(\Lambda \in \big(V(G)^k \times \{0,1\}\big)^m\)}
    \STATE{\(\textit{min\_errors} \gets m+1\)}
    \FORALL{\(\phi' \in \Phi'\)}
    \FORALL{\(\bar{w} \in \big(V(G)\big)^{\ell}\)}
\STATE{\textit{errors} \( \gets 0\)}
    \FORALL{\((\bar{v}, \lambda) \in \Lambda\)}
    \STATE{Let \(G'\) be the graph with \(V(G')=V(G)\),
      \(R_i(G') = \{v_i\}\) for \(i \leq k\),
      \(S_j(G') = \{w_j\}\) for \(j \leq \ell\),
and \(R(G') = R(G)\) for other relations}
    \IF{(\(\lambda = 0\) and \(G' \models \phi'\)) or
      (\(\lambda = 1\) and \(G' \not\models \phi'\))}
    \STATE{\(\textit{errors} \leftarrow \textit{errors} + 1\)}
    \ENDIF
    \ENDFOR
    \IF{\(\textit{errors} < \textit{min\_errors}\)}
    \STATE{\(\phi_{\min}' \leftarrow \phi'\), \(\bar{w}_{\min} \leftarrow \bar{w}\)}
    \ENDIF
    \ENDFOR
    \ENDFOR
    \RETURN{\(\big(\phi_{\min}(\bar{x}\smid\bar{y}),\,\bar{w}_{\min}\big)\),
      where \(\phi_{\min}\) is obtained from \(\phi_{\min}'\) by replacing
      all occurrences of \(R_i x\) with \(x=x_i\) and all occurrences of
      \(S_i x\) are with \(x=y_i\)}
  \end{algorithmic}
\end{algorithm}

\begin{proof}
We start the proof of the learnability for the setting that
  \(\ell\) is constant by adding
  \(k+\ell\) unary relations (colours) to \(\tau\):
  \(\tau' \coloneqq \tau \uplus \{R_1, \dots, R_k, S_1, \dots, S_{\ell}\}\).
  Then we have a finite set \(\Phi' \subseteq \FO[\tau', q]\)
  of formulas in Gaifman normal form (here: sentences in Gaifman normal form)
  with quantifier rank at most \(q\).
  The algorithm for finding a hypothesis is given in pseudocode
  in \cref{algorithm:constant-l}.
  We search for a sentence that minimises the training error in the graph \(G'\)
  over the extended alphabet \(\tau'\).
  This sentence is then translated back into a formula with \(k+\ell\)
  free variables by substituting all occurrences of the additional relations.
  The translated formula is then returned by the algorithm,
  solving the problem \FOERM{} and its relaxation \RFOERM{L,Q}.

  Let \(n = \abs{V(G)}\).
  Using the assumption that \(k,\ell \leq n\),
  \cref{algorithm:constant-l} runs in time
  \[
    |\Phi'| \cdot n^{\ell}\! \cdot m \cdot
      \big(k + \ell\! + n + g(q\!+ |\tau'|) \cdot n^c\big)
    \in \bigO\big(f(k+q\!+\abs{\tau}) \cdot m \cdot n^{\ell\!+ c + 1}\big).
  \]
  for some functions \(f\) and \(g\).
  In this running time, the first three factors correspond to the for-loops
  of the algorithm and the bracketed part comes from model checking.
\end{proof}

\subsection{Proof of \cref{lem:k-one}}

\kIsOne*{}

  \begin{algorithm}
  \caption{\FOERM{} learning algorithm for \(k=1\)}
  \label{algorithm:k-one}
  \begin{algorithmic}[1]
    \REQUIRE{Training sequence \(\Lambda \in \big(V(G)^k \times \{0,1\}\big)^m\)}
    \FORALL{\(\phi \in \Phi'\)}
    \STATE{\(\textit{consistent} \leftarrow \TRUE\)}
    \FOR{\(i=1\) to \(\ell\)}
    \STATE{\(\phi_i(x, y_{i+1}, \dots, y_{\ell})
      \coloneqq \exists y_1 \dots \exists y_i \big( \bigwedge_{j=1}^i S_j y_j\)
      \hfill\\ \hspace{4.5cm} \(\land \phi(x, y_1, \dots, y_{\ell})\big)\)}
    \STATE{\(\textit{found} \leftarrow \FALSE\)}
    \FORALL{\(u \in V(G)\)}
    \STATE{Let \(G'\) be the graph with \(V(G')=V(G)\),
      \(S_j(G') = \{w_j\}\) for all \(j < i\),
      \(S_i(G') = \{u\}\),
      \(P_+(G') = \Lambda^+\),
      \(P_-(G') = \Lambda^-\),
      and \(R(G') = R(G)\) for other relations}
    \IF{\(G' \models \exists y_{i+1} \dots \exists y_{\ell}\! \forall x
      \Big(\big(P_+x \to \phi_i(x, y_{i+1}, \dots, y_{\ell}\!)\big) \land\)
      \hfill\\\hfill
      \(\big(P_-x \to \neg \phi_i(x, y_{i+1}, \dots, y_{\ell}\!)\big)\Big)\)}
    \STATE{\(w_i \leftarrow u\)}
    \STATE{\(\textit{found} \leftarrow \TRUE\)}
    \BREAK
    \ENDIF
    \ENDFOR
    \IF{not \(\textit{found}\)}
    \STATE{\(\textit{consistent} \leftarrow \FALSE\)}
    \BREAK
    \ENDIF
    \ENDFOR
    \IF{\(\textit{consistent}\)}
    \RETURN{\(\big(\phi(x\smid\bar{y}),\,\bar{w}\big)\)}
    \ENDIF
    \ENDFOR
    \REJECT
  \end{algorithmic}
\end{algorithm}

\begin{proof}
  Let \(\tau' \coloneqq \tau \uplus \{S_1, \dots, S_{\ell}, P_+, P_-\}\).
Again, similar to the setting in \cref{lem:constant-l},
  we have a finite set \(\Phi' \subseteq \FO[\tau',q]\)
  of formulas in Gaifman normal form with quantifier rank at most \(q\)
  with \(\ell+1\) free variables.
  The algorithm uses a relation \(P_+\) for all positive and a relation \(P_-\)
  for all negative examples.
  Since \(k=1\), those relations are both unary and the resulting graph is still
  contained in \(\CC\).
  Using those unary relations for the examples, it is possible to test whether
  a given prefix of parameters can be extended to a consistent one.
  The algorithm thus starts by testing for every node \(u\in V(G)\)
  whether it can be extended to a consistent parameter setting.
  If it finds such a node \(u\), it fixes \(w_1=u\) and proceeds with the search
  for \(w_2\).
  Let \(n = \abs{V(G)}\).
  After at most \(\ell \cdot n\) model-checking steps,
  the algorithm has discovered a consistent parameter setting if there is one.
  As in \cref{lem:constant-l}, we use model checking for sentences only
  and expand the background structure to contain additional unary relations
  (colours) for the free variables of the formula.
  In our case, the free variables of the formulas we evaluate are exactly the
  parameter prefixes. The algorithm is given in pseudocode in
  \cref{algorithm:k-one} and runs in time
  \[
    |\Phi'| \cdot \ell\! \cdot n \cdot
      \big(n + m + \ell\! + g(q\!+\ell\!+|\tau'|) \cdot n^c\big)
    \in \bigO\big(f(\ell\!+q\!+\abs{\tau}) \cdot m \cdot n^{c+1}\big).
  \]
  for some functions \(f\) and \(g\).
  In this running time, the first three factors correspond to the for-loops
  of the algorithm and the bracketed part comes from model checking.
\end{proof}

\subsection{Proof of \cref{lem:projection}}

\projection*{}

\begin{proof}[Proof of \cref{lem:projection}]
  We let \(G^{i+1}\) be the graph obtained from the induced
  \(R'\)-neighbourhood structure \(\Neighb{R'}{G^i}{Z}\) as follows.
We start with the construction;
  the roles of each of the steps will become clear
  while proving the statements of the lemma.
  \begin{enumerate}
    \item Expand the graph by fresh colours
    \(D_{j,d}\) for \(j \in [\ell']\) and \(d \in \{0, \dots,(k+2)(2r+1)\}\).
    We let \(D_{j,d}(G^{i+1})\) be the set of all \(v\)
    such that \(\dist_{G^i}(v,y_j)=d\).
    \item Expand the graph by fresh colours \(C_{j}\) for \(j \in [\ell'']\).
    Let \(C_{j}(G^{i+1}) \coloneqq \neighb{1}{G^i}{w_j}\).
    \item Delete all edges incident with the \(w_j\).
    Moreover, we add fresh colours \(B_j\) for \(j \in [\ell'']\)
    and let \(B_j(G^{i+1}) \coloneqq \{w_j\}\).
    \item For each non-empty set \(I \subset [k]\)
    and each \(\abs{I}\)-variable \(q^*\)-type
    \(\theta \in \Types{\tau}{\abs{I}}{q^*}\),
    we add an isolated vertex \(t_{I,\theta}\)
    and a fresh colour \(A_{I,\theta}\)
    and let \(A_{I,\theta}(G^{i+1}) = \{t_{I,\theta}\}\).
  \end{enumerate}

Thus, structurally, \(G^{i+1}\) consists of the neighbourhoods
  \(G_j^{i+1} \coloneqq G^i \big[\neighb{R'}{G^i}{z_j} \setminus \{w_j\}\big]\)
  for \(j \in [\ell'']\)
  and isolated vertices \(w_1, \dots, w_{\ell''}\)
  and \(t_{I,\theta}\) for all non-empty \(I \subset [k]\)
  and \(\theta \in \Types{\tau}{\abs{I}}{q^*}\).
  Hence, Statements \eqref{lem:proj:nodes} and \eqref{lem:proj:edges} of
  \cref{lem:projection} hold by the construction of \(G^{i+1}\).
  Note that the neighbourhoods \(G^{i+1}_j\) and \(G^{i+1}_{j'}\)
  are disconnected for each \(j\neq j'\) by the construction of \(Z\).

  Splitter's winning strategy on \(G^i\) is still valid on \(G^{i+1}\),
  but one of the steps of the splitter game
  (removing a vertex Spoiler chose and continuing in the neighbourhood of a vertex
  Connector chose) has already been performed
  in each of the neighbourhoods \(G_j^{i+1}\)
  in the construction of \(G^{i+1}\).
  Hence, Statement~\eqref{lem:proj:winning} follows.

Next, we define the training sequence
  \(\Lambda^{i+1} \in \big(V(G^{i+1})^k \times \{0,1\}\big)^{m_{i+1}}\)
  for the next round.
  For this step, we will need the isolated vertices introduced in Step 4
  in the construction of \(G^{i+1}\).
  We only need to consider examples in \(\Gamma^i\),
  which are involved in a conflict;
  all other examples can be correctly classified without the use of parameters.
For each tuple \(\bar{v} = (v_1, \dots, v_k) \in \Gamma_i\)
  with \(\bar{v} \cap \neighb{6r+3}{G^i}{Y} \neq \emptyset\),
  we define a tuple \(\bar{v}' = (v_1', \dots, v_k') \in V(G^{i+1})\)
  for which we add \((\bar{v}', \lambda)\) to \(\Lambda^{i+1}\)
  if \((\bar{v}, \lambda) \in \Lambda^i\).
  For that, we consider the graph \(H_{\bar{v}}\)
  with vertex set \(V(H_{\bar{v}}) \coloneqq [k]\)
  and edges \(\{a,b\}\) for all \(a,b \in [k]\) with
  \(1 \leq \dist_{G^i}(v_a,v_b) \leq 2r+1\).
  Let \(I_1, \dots, I_p\) be the vertex sets
  of the connected components of \(H_{\bar{v}}\).
  For each component \(I_j\), we proceed as follows:
  if there is some \(j_0 \in I_j\) such that
  \(v_{j_0} \in \neighb{6r+3}{G^i}{Y}\),
  we let \(v_a' \coloneqq v_a\) for all \(a \in I_j\).
  Note that for all \(a \in I_j\) we have
  \(\dist_{G^i}(v_{j_0}, v_a) \leq (k-1)(2r+1)\) and hence,
  \(\dist_{G^i}(Y, v_a) \leq 6r+3 + (k-1)(2r+1) = (k+2)(2r+1)\).
  Thus, \(v_a' \in \neighb{(k+2)(2r+1)}{G^i}{Y} \subseteq
  \neighb{R'}{G^i}{Z} \subseteq V(G^{i+1})\).
  This is the point that determines the choice of \(R\)
  of the splitter game in the very beginning of the proof.
  Otherwise, if \(v_a \not\in \neighb{6r+3}{G^i}{Y}\) for all \(a \in I_j\),
  we consider the restriction \(\bar{v}|_{I_j}\) of \(\bar{v}\)
  to the indices in \(I_j\) and let
  \(\theta \coloneqq \ltype{q^*}{r}{G^i}{\bar{v}|_{I_j}}\).
  In this case, we let \(v_a' \coloneqq t_{I_j,\theta}\) for all \(a \in I_j\).
  Note that for some examples this means that they only consist of isolated
  vertices and thus will never be correctly classified by our algorithm.
Observe that two tuples \(\bar{v}_1', \bar{v}_2'\)
  that appear in examples in \(\Lambda^{i+1}\)
  can only have the same type in \(G^{i+1}\)
  if their counterparts \(\bar{v}_1, \bar{v}_2\) from \(\Lambda^i\)
  have the same type in \(G^i\).
  Hence, we do not create any new conflicts by the construction.

  Let \(A = \Neighb{R'}{G^i}{Z}\) and \(A'\) the modified version
  of this neighbourhood in \(G^{i+1}\).
  It is easy to interpret \(A\) in \(G^{i+1}\) using the information encoded
  in the fresh colours added in Step 1-3 of the construction of \(G^{i+1}\).
  Using the parameters \(w_1,\dots,w_{\ell''}\), we can also interpret the
  modified neighbourhood \(A'\) in \(G^i\). We use this connection to prove
  Statement \eqref{lem:proj:down} and \eqref{lem:proj:up}.
  Note that for encoding the distance information, the increased quantifier
  rank \(q\) might be necessary which is why we chose \(q=q^*+\log R\).

Assume there is a tuple \(\bar w^i \in V(G^i)^{\ell^*}\) that discriminates
  all but \(m\epsilon'\) examples in \(G^i\).
  By \cref{lem:existenceOfDiscriminating}, there is a tuple
  \(\bar u^i \in (\neighb{4r+2}{G^i}{Z})^{\ell^*}\) that discriminates all but
  \(m(\epsilon'+\frac{\epsilon}{s})\) examples in \(G^i\).
  Then \(\bar u^i\) also discriminates all but
  \(m(\epsilon'+\frac{\epsilon}{s})\) examples in \(G^{i+1}\) due to the way we
  projected the examples and the fact that we can emulate \(A\) in \(G^{i+1}\).
  This shows Statement~\eqref{lem:proj:down}.

If we can distinguish \(\bar{v}_1', \bar{v}_2'\) in \(G^{i+1}\)
  using parameters \(\bar{w}^{i+1}\),
  then we can distinguish \(\bar{v}_1, \bar{v}_2\) in \(G^i\)
  using \(\bar{w}^{i+1}\) and the chosen parameters \(w_1, \dots, w_{\ell''}\).
  This holds as we can interpret \(A'\) in \(G^i\).
  Thus, if \(\bar{w}^{i+1}\) discriminates all but
  \(c\) examples in \(\Lambda^{i+1}\),
  then \(\hat{w}^i \bar{w}^{i+1}\) discriminates all but
  \(c\) examples in \(\Lambda^i\).
  This shows Statement \eqref{lem:proj:up}.

To finish the proof,
  we need to show that all steps can be performed by an fpt algorithm.
  In the first three steps,
  the number of new colours only depends on the parameters \(k\), \(\ell^*\),
  and \(q^*\).
  In step 4, the number of new colours only depends on \(k\), \(q^*\)
  and the size of the signature of \(G^i\).
  Since the number of computed graphs is bounded by \(s\)
  (and therefore in terms of \(q^*\), \(k\), and \(\ell^*\)),
  the total number of fresh colours only depends on the parameters.
  Furthermore, the computations are linear in the size of \(G^i\).
  For the projection of the examples,
  the computation of the types can be performed by an fpt algorithm
  because \(G^i\) is nowhere dense.
  All other computations for the projection
  can be done by an fpt algorithm as well.
\end{proof}

\end{document}